\newtheorem{theorem}{Theorem}[section]
\newtheorem{proposition}{Proposition}[section]
\newtheorem{remark}{Remark}
\newcommand{\hs}{\hspace{0.3em}}
\title{Mean Field Game Approach to Bitcoin Mining}
\author{
Charles Bertucci\footnote{CMAP, Ecole Polytechnique, UMR 7641, 91120 Palaiseau, France}
\and
Louis Bertucci\footnote{Institut Louis Bachelier, 75002 Paris, France} \footnote{Haas School of Business, UC Berkeley, Berkeley, California}
\and
Jean-Michel Lasry\footnote{Université Paris-Dauphine, PSL Research University,UMR 7534, CEREMADE, 75016 Paris, France}
\and
Pierre-Louis Lions\footnote{Collège de France, 3 rue d'Ulm, 75005, Paris, France} \hspace{0.5ex}\footnotemark[4]}
\begin{document}
\maketitle


\vspace{1cm}

\begin{abstract}

We present an analysis of the Proof-of-Work consensus algorithm, used on the Bitcoin blockchain, using a Mean Field Game framework. Using a master equation, we provide an equilibrium characterization of the total computational power devoted to mining the blockchain (hashrate). From a simple setting we show how the master equation approach allows us to enrich the model by relaxing most of the simplifying assumptions. The essential structure of the game is preserved across all the enrichments. In deterministic settings, the hashrate ultimately reaches a steady state in which it increases at the rate of technological progress. In stochastic settings, there exists a target for the hashrate for every possible random state. As a consequence, we show that in equilibrium the security of the underlying blockchain is either $i)$ constant, or $ii)$ increases with the demand for the underlying cryptocurrency.

\end{abstract}

\vfill

\pagebreak

\section{Introduction}

Blockchains are attracting more and more interest from different areas of research. Understanding the game theoretic aspects of a blockchain is fundamental as it was built to operate in a decentralized setup, in which agents do not face any strategy restrictions whatsoever. This should be understood as no restrictions beside the basic protocol. For instance an agent could send random messages or disconnect for a while, there is no way to enforce \enquote{good behavior} except with a properly designed incentive scheme.

This paper is devoted to modeling the competition between miners in a \emph{Proof-of-Work} (PoW) based blockchain using a mean field game (MFG) framework. We derive a simple model which relies on the so-called master equation. Our model can be easily adapted to treat a various set of situations encountered in the mining economy. In particular we show that the mining game resulting from the PoW algorithm is well posed and contains some intrinsic stability that translates to security.

A blockchain is a communication protocol allowing a group of users to update and maintain a distributed database (or ledger) in a trustless fashion. Every participant holds his own version of the ledger, and the blockchain protocol allows them to keep all versions synchronised. A blockchain allows consensus to arise without relying on any trusted agents. At the fundamental level, entries in a blockchain can be of any kinds, it is just a list of items (transactions, assets ownership, etc.), but there needs to be a native currency (called cryptocurrency) used as a reward mechanism. Indeed, maintainers are rewarded with the native cryptocurrency for validating new entries. PoW is the consensus algorithm used on the biggest permissionless blockchains including Bitcoin. In other words, it is the algorithm that allows every honest participant to update their version of the ledger in sync.

The PoW algorithm can be thought of as a non-cooperative game with a very large number of players. Let's now briefly describe the PoW algorithm (for more details see \citeasnoun{nakamoto2008bitcoin} and \citeasnoun{garay2015bitcoin}). Essentially, miners are competing to brute force the solution of a hash-based puzzle. The winner gets to propose a new block that would be accepted by other miners if valid, in which case it determines the next puzzle to be solved. The winning miner gets rewarded in the native cryptocurrency when he finds a block. Since this problem can only be solved by brute-force, more computing power will speed up the resolution. The computing power devoted to mining is also called the hashrate, which represents the number of trials per second that the network is performing trying to solve the mining puzzle. In order to maintain stability in the blockchain, the mining puzzle difficulty is dynamically adjusted so that, on average, the time between the creation of two consecutive blocks is constant\footnote{On the Bitcoin blockchain for instance, the block interval is set to 10 minutes and the difficulty adjusts every 2016 blocks which is, on average, 2 weeks. Note that those adjustments are not made by \enquote{someone} but are part of the consensus rules. In other words, the difficulty changes for all miners at the exact same time.}. Therefore as the hashrate increases, the difficulty increases so that it is required to compute more hashes for a given block. Figure \ref{hashRate} represents the estimated total instantaneous bitcoin hashrate since early 2011\footnote{Data are from the website {\color{blue}\emph{blockchain.com}} but the can be recovered from the difficulty and the actual inter block time, which can be both read within the blockchain.}. 

\begin{figure}[h]
\includegraphics[scale=0.4]{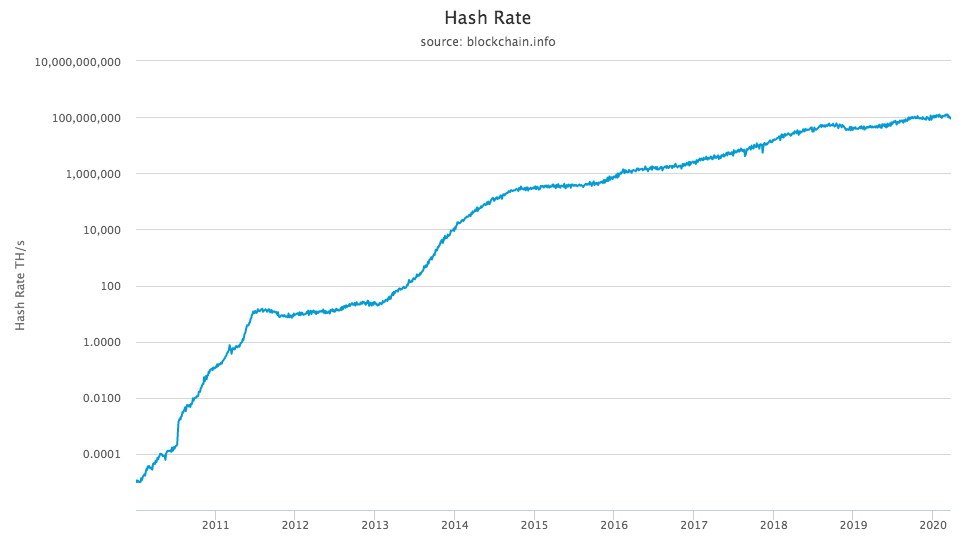}
\caption{Bitcoin Hashrate (tera hashes per second - log scale)}
\label{hashRate}
\end{figure}

Our goal is to analyze the equilibrium behavior of miners and therefore deduce the equilibrium outcome regarding the hashrate. The total hashrate represents the total computational power devoted to block creation. It is a measure of how harsh the competition among miners is, and is directly linked to the security of the underlying blockchain. Indeed, the most common attack scenario on a PoW chain is the so-called 51\% attack in which an attacker who owns more than 51\% of the total hashrate can make a double spend\footnote{Double spending refers to situation in which an agent is able to spend the same amount of money twice. An attacker with more hashrate than the rest of the network could in principle rewrite part of the chain and perform a double spending. For more information on 51\% attack see : \citeasnoun{grunspan2018double}.}. Therefore the higher the hashrate, the more expensive it is to own more than 51\% of it, and the more secure the blockchain.

The contribution of this paper is twofold. First, we show that the game arising from the PoW algorithm always lead to a unique equilibrium, it is well posed. Second, we analyze the long-run equilibrium, as well as the short-run dynamics, in a wide variety of settings meant to represent different sets of real world characteristics, that could all be combined into a quite general model. The main results in the long run is that the PoW algorithm contains some stability that translates to security, namely that security in the long run is either $i)$ constant in deterministic settings or $ii)$ positively driven by the demand of the underlying cryptocurrency. 

Our model features a continuum of risk neutral miners interacting only through the aggregate computational power they allocate to mining the blockchain. Such a setup can be analyzed within a MFG framework.

The most general way to describe the PoW mining game is as follows. The blockchain produces a deterministic output (expressed and paid in units of cryptocurrency) that miners are competing to get. Miners control the hashrate they provide, which increases their share of the blockchain reward, all other things being equal, while reducing all other miners share of the reward. On the other hand, mining chips consume a lot of energy and need to be supplied with electricity. All miners have access to the same technology for computing hashes. We assume a constant rate of technological progress on the mining hardware, that is the mining chips become exponentially more efficient at converting electricity into hashes.

A key feature of our analysis is the use of the Mean Field Game theory, and in particular the master equation formulation. This allows us to consider equilibrium with a continuum of agents. In general, it is usually possible to solve for MFG equilibrium using a forward-backward system with a Hamilton-Jacobi equation and a Fokker-Planck equation for the evolution of the state variables. In the master equation approach, the dependence to the initial condition is hidden as it considers the value function on the whole domain, it is therefore more general.

To make the analysis as clear as possible, our strategy is to introduce a very simple version of the mining game, with a set of restricting assumptions, and show that the mathematical formulation of the model is well-posed and intuitive. Then we present a set of extensions that represent more accurately real-world conditions, for which the structure of the model is preserved as well as the main properties. With this approach we hope to convince the reader that our formulation of the model is general enough and that it can be tuned to fit different sets of assumptions.

There are three assumptions that are common across all extensions. First, we assume that each computed hash yields a share of the total reward. In practice, the blockchain reward is discrete as it is paid to the wining miner every time a block is solved. As a consequence, a miner contributing a small portion of the total hashrate will likely have to wait for a while before getting his share of the reward. This delay might cause him to run out of money because ke will likely have to sell the cryptocurrencies earned as reward to pay for the mining cost (electricity). To hedge this risk, miners have the possibility to pool their mining resources, and share the reward accordingly. We abstract from frictions that can happen on the pooling market. Second, we assume that the mining difficulty continuously adjusts. There are short term frictions that can arise from the delay in difficulty adjustment. Depending on the protocol, the mining difficulty adjusts more or less frequently, but it is always discrete. This can make the actual inter-block time to diverge temporarily but significantly from its long-term average. Lastly, we assume the block reward is constant over time. In practice, there are two factors that can change the block reward : $i)$ coin issuance schedule, and $ii)$ transaction fees. Take Bitcoin for instance. Within each block, some newly created pieces of cryptocurrency are awarded to the successful miner. Because it was designed to have a maximum supply of cryptocurrency, the coin issuance is halved every 210,000 blocks, which corresponds to roughly 4 years\footnote{The issuance schedule is different for every blockchain but note that there are usually long periods for which it is constant.}. Also, transaction fees can change a lot over time depending on the congestion in the pending pool of transactions. In practice, as of early 2020, transaction fees still have a low proportion in the total block reward, and except around the halving period, the block reward can be approximate by a constant.

In the majority of the paper we assume that miners continuously buy and sell mining hardware but are constrained on how fast they can change the hashrate they provide. On the buying side, this constraint represents a direct cost of increasing the mining capacity, such as buying a new warehouse to store the mining rigs, or the need to invest in security for large mining facilities. These constraints could also come from the mining chips manufacturers side, that is the production of mining chips is not infinite and there might be some shortage when the demand is high. No matter were this constraint comes from, it makes the total hashrate to adjust continuously to profit opportunity at a finite elasticity. For the sake of clarity we also make the same assumption on the selling side, that this the mining rigs can be sold if the expected profit is negative but with some frictions that prevent jumps in the hashrate. We provide an extension of the baseline model in which we show that our results are robust to different interactions on the market for mining hardware.

Regarding the hashrate, because of the assumed technological progress, it turns out it is more convenient to focus on the hashrate discounted by the rate of technological progress, which we call the \emph{real hashrate} by opposition to the true hashrate which we call the \emph{nominal hashrate}. Indeed, as the ability to convert electricity into hashes increases, it becomes increasingly cheaper to reproduce the same nominal hashrate, while a constant real hashrate represents a constant cost for miners, everything else being equal.

An important component of the PoW algorithm is that the currency of the reward is different from the currency used to pay for the cost. Indeed, the blockchain pays out a reward in cryptocurrency (reward currency) while electricity and mining hardware needs to be paid with traditional fiat currency like the Dollar or the Euro (cost currency). Therefore the price of the reward currency relative to the cost currency is a major determinant of the incentive to mine\footnote{As it will become clearer in the model, the cryptocurrency price is the main mechanism by which the incentives to mine will change over time. Since the supply of cryptocurrency is fully deterministic, the demand on the secondary market will set a market price, which will in turn give miners more or less incentives to mine. This is the core of the PoW consensus algorithm.}. 

In the most simple version of the model, we take the price of the underlying cryptocurrency as constant. As an intuition, this could also corresponds to a potential long term situation in which the cryptocurrency market is mature and the volatility of the cryptocurrency has dropped to close to 0, or similarly a situation in which electricity can be paid in cryptocurrency. This assumption is relaxed in on of the extensions, but it will make the first analysis clearer. A continuum of risk neutral miners continuously adjust their hashrate. As explained above, investment decisions are such that the hashrate is continuous, and the mining hardware cannot be turned off. This simplistic structure yields a clear analysis of the equilibrium. Note that when the cryptocurrency price is constant the model is deterministic.

Our first result is to show that the mining game arising from the PoW algorithm, as previously described is well posed. We show that the value function of one unit of real hashrate always exists and is unique. Note that this model is stationary and the only state variable is the total hashrate expressed in real terms. Furthermore, to characterize the dynamics of this real hashrate, we show that, regardless of the initial condition, it eventually reaches a stationary state. At this level, miners collectively add new machines, just to compensate for the depreciation due to the technological progress. In other terms, the real hashrate is constant, and the nominal hashrate increases at the rate of the technological progress. In the most simple version of the model, we obtain a closed form solution for the value of the stationary state, and we provide some comparative statics on $i)$ the value of the hashrate, $ii)$ the value generated by each unit of hashrate, and $iii)$ the total value generated by all the miners. In particular, we show that as the level of friction in the market for mining hardware decreases, the number of mining machines (the hashrate) increases. On the other hand the value generated by each machine decreases to reach zero in the limit case with no friction. Therefore the total value generated by miners decreases with the level of friction. Also, when the rate of technological progress increases, it decreases the hashrate while increasing the value generated by each mining machine. The overall effect on the total value generated by all miners is ambiguous. For low (high) value of technological progress, the total value generated by miners increases (decreases) with the technological progress. There seems to be a rate of technological progress that maximizes the total value generated by miners. In other words, from the point of view of miners, some technological progress is desirable but not too much.

We then provide a set of extensions to show how the master equation approach can help us refine the model to fit a more realistic set of assumptions. In general, the structure, as well as the regularity of the model is preserved. As a first extension, we introduce some randomness in the exchange rate between the reward currency (the cryptocurrency) and the cost currency. We show that the equilibrium is still unique and well defined. With a random price, we prove existence of a function that gives a target value for the hashrate for each price. This is the equivalent of the stationary state in the deterministic setting. The randomness will keep the target hashrate moving around (smoothly), and the game forces will push the real hashrate toward this moving target. Second, we present an extension with two different populations in the deterministic setting. In particular, the two populations face two different prices for electricity. We show that the best way to analyze such a setting is to consider the master equation with two state variables (the real hashrate for each population). We prove existence and uniqueness of the equilibrium. Like in the setting with a single population, the real hashrate will eventually converge toward a stationary state. Depending on the difference in electricity price, the stationary state can be located on the boundary of the domain. In other words, if the difference in electricity prices is large enough, only miners with access to the cheap source of electricity will actively mine the blockchain. Third, we present a two-population setting with a random exchange rate between the currency used by the two populations. We show how to consider such a setting and extend the results of the previous extensions. For every value of the exchange rate, there is a two-dimensional stationary state that will attract the real hashrate. Lastly, we study the case when there is no more friction on the computing machines market, but the machines have a zero resale value. In this case, because of free entry, the value function cannot be positive. We show that there is still a unique equilibrium. The real hashrate still reaches a stationary state which corresponds to the limiting case when the level of friction tend to zero. Because of free entry, if the initial condition is such that the hashrate is below the stationary state, it will instantly reach it. However, because the machines cannot be sold, if the initial hashrate is higher than the stationary state, it will decrease at the rate of technological progress until it reaches the stationary state.

This paper builds on the growing blockchain literature. Since \citeasnoun{nakamoto2008bitcoin}, blockchains have attracted more and more interest from academics. Some papers, like \citeasnoun{biais2018equilibrium}, \citeasnoun{schilling2019some} and \citeasnoun{pagnotta2018equilibrium} study the underlying economics of bitcoin and analyze the implied valuation for cryptocurrencies. We rather focus on the economics of the consensus algorithm and take the price of the cryptocurrency as exogenous.

There is a growing literature studying frictions that can arise on a blockchain. \citeasnoun{easley2019mining} build a model for analysing the emergence of transaction fees, and \citeasnoun{huberman2017monopoly} study the cost structure of a blockchain based transactional system. Transaction fees do not play a great role in our analysis and we take them as constant. As of early 2020, transaction fees are a small proportion of the block reward. In the future, as the supply of new coins will decrease, it is possible that transaction fees represent a higher share of the block reward, but our model will work the same as long as the fees volatility is low.

Some authors have studied optimal contracting under mining pools in which miners gather to mitigate the risk of not finding a block (see for instance \citeasnoun{schrijvers2016incentive}, \citeasnoun{fisch2017socially}, and \citeasnoun{cong2019decentralized}). In line with the results of the mining pool literature but for reasons related to R\&D investment, \citeasnoun{alsabah2019pitfalls} show that the mining industry will tend to be centralized. While those questions are an important part of blockchain economics, we take a more general approach and focus on the fundamental mechanisms. Indeed, we assume miners are perfectly diversified (or equivalently focus on the long-run) such that a share of $x\%$ of the total hashrate yields $x\%$ of the reward. This is as if every single hash computed was remunerated to its marginal reward. \citeasnoun{biais2019blockchain} study the game theoretic aspects of forks, showing that there are multiple Nash equilibria in which forks may exist. We analyze the mining game from the computational intensity point of view, and in our model there is a single chain miners can mine.

This paper uses a Mean Field Game framework to model the mining game. Since its introduction by \citeasnoun{lasry2007mean}, the MFG theory has been quite extensively developed. In particular, there are some recent works on MFG involving different populations like \citeasnoun{cirant2015multi}, \citeasnoun{achdou2017mean}. We uses similar results to analyze the mining game when two different populations of computing hardware co-exist. \citeasnoun{li2019mean} also analyze the mining game in an  MFG setup however their approach is quite different. Indeed, they focus on the risk borne by risk-averse miners and study mining concentration. Our paper assumes all miners are risk-neutral and perfectly diversified and focuses on the total hashrate stability. Also, instead of using the classical forward-backward system, we study the so-called master equation, which is probably the most general way of studying such games (see \citeasnoun{cardaliaguet2019master}).

With an analysis closely related to ours, \citeasnoun{prat2018equilibrium} build a structural model to estimate the underlying unobserved parameters. Like us, they take a general approach and take miners as diversified risk neutral agents, and the share of the reward they get is strictly proportional to the hashrate they contribute. Their main assumption is free entry in the mining industry when there is a profit opportunity. Because of variations in the price, buying new mining machines is not always profitable but when it is profitable, they show that in equilibrium a number of miners enter the game so that the expected profit is reflected along the zero profit line. This produces an equilibrium hashrate that is piece-wise constant. Our approach could include such a case by combining two of the extensions we provide : the random price extension and the last extension that assumes free entry and no friction in the hardware market. With our approach, uniqueness for their model can be proven.

The remaining of the paper is organized as follows. Section \ref{mod} presents the basic formulation of the model, while section \ref{mainRes} analyzes the resulting master equation and the mining dynamics. Section \ref{exts} provides a set of extensions of the base model. Section \ref{masterVsHJB} comments on the use of the master equation. Section \ref{conclusion} concludes.

\section{The model}
\label{mod}

This section describes the most simple version of the model. We focus on the core mechanisms of the PoW consensus algorithm. Most of the simplifying assumptions will be relaxed in section \ref{exts}.

Let's define by $P$ the hashrate measured as the number of hashes computed per second, hereafter we use the term \emph{nominal hashrate}, or \emph{hashrate in nominal terms}. Let $\delta$ be the rate of technological progress of the mining machines, that is how much more hashes can new generations of hardware compute for a given cost. In other words, it represents the fact that for the same cost a miner can get more hashes per second, it becomes cheaper to buy the same computational power. To make the analysis clearer we focus our attention on the progress-adjusted hashrate, let's call it the \emph{real hashrate}, and denote this real hashrate at time $t$ by $K_t$.
\[
K_t\coloneqq e^{-\delta t}P_t.
\]

The real hashrate, $K_t$, is the nominal hashrate discounted by the rate of technological progress. Otherwise stated, it is the number of current machines needed to rebuild the entire hashrate. Variations in the real hashrate, $K$, represent variations in the hashrate not related to the technological progress. The following consists of analyzing the dynamic behavior of $K$ in a continuous-time model. 

Let us comment on the importance of the real hashrate. On a PoW chain the hashrate (the true one, i.e. the nominal hashrate) is often associated to the blockchain security. This is related to the most common way to attack a PoW chain that is the so called 51\% attack. If an attacker was able to control at least the majority of the total hashrate he could in principle rewrite the blockchain which is of course against the core principle of consensus\footnote{Note that an attacker with a hashrate slightly higher than 51\% cannot go \enquote{back in time} in the sense that he cannot rewrite old blocks deeply buried into the blockchain. The more hashrate he has, the older the block he can change. For more details, see \citeasnoun{nakamoto2008bitcoin} and \citeasnoun{grunspan2018double}.}. There are two ways for someone to end up with more than 51\% of the hashrate : $i)$ an internal attack in which existing miners would collude to obtain the required majority, and an $ii)$ external attack in which an outside agent would buy enough mining hardware to obtain the majority. Let us focus on the latter as more decentralization among miners should help prevent from internal 51\% attack but it is out of the scope of this paper\footnote{For a discussion on collusion attacks see \citeasnoun{eyal2014majority}.}. The important aspect of a 51\% attack is its cost. Because of the rate of technological progress, if the nominal hashrate were to be constant, as time goes on it would become cheaper and cheaper to buy and run hashing devices. It is therefore more convenient to use the real hashrate as it has a direct connection to the cost of a 51\% attack and therefore to the security of the blockchain. Indeed, recall that the real hashrate represents the number of current machines needed to re build the entire nominal hashrate. We address this question of security in section \ref{comparativeStatics}.

Let's consider a model in which the machines cannot do anything but mine the blockchain, and the price of the associated cryptocurrency is constant and normalized to 1. As discussed in the introduction, this is the most simple version of the model and those assumptions will be relaxed in section \ref{exts}.

We assume that miners discount time at rate $r$. Miners continuously buy new computation units to compute hashes\footnote{On the most popular blockchains miners use Application Specific Integrated Circuits (ASICs) to compute hashes. The hashing algorithm is hard-coded into the machines and a lot of speed efficiency can be obtained. For the sake of simplicity we refer to hash-computing machines as \enquote{mining machines}, \enquote{mining devices}, \enquote{computing chips} or even \enquote{machines}.}. The mining game works as follows. There is a fixed number of coins output by the blockchain per unit of time, miners compete against each other to earn a share of this fix output. The share they get is proportional to their relative share of the total computational power.

Let $U$ be the value of 1 unit of real hashrate. We assume that there are no barriers to entry in the mining market, however there are some frictions in the mining devices market affecting indistinctly all miners (current miner and potential new miners). Whenever there is a profit opportunity associated to mining the blockchain, i.e. $U>0$, miners cannot simply buy an infinite amount of computing machines. This friction could be justified through capacity constraints on the mining device manufacturers side. Indeed, mining equipments are frequently sold out and so even if miners want to buy more machines they might need to wait for the manufacturer to produce more machines. Note that in this version of the model, we assume that the effect is exactly similar when $U<0$, i.e. miners are reselling their hardware but with some constraints. Having a symmetric constraint helps understanding the underlying mechanisms, we show in section \ref{2p} that this symmetric assumption can be relaxed.

Let $\lambda$ be the parameter that represents this investment constraint, that is the continuous flow of entry of computing devices is given by $\lambda U$. In addition to that, because of the technological progress, the real hashrate depreciates at rate $\delta$. This represents the fact that if nothing happens it becomes cheaper to buy the same nominal hashrate and therefore less secure. This produces the following dynamic for the real hashrate $K_t$.

\begin{equation}
\dot{K}_t=-\delta K_t + \lambda U_t.
\label{dynK}
\end{equation}

Because $\dot{K}$ depends on $U$, we need to understand the value of the real hashrate. On the one hand, a unit of real hashrate yields a revenue of $\nicefrac{1}{K}$. Indeed, recall that the blockchain pays out a fix amount per unit of time (here normalized to 1), and the revenue of a miner is proportional to its share of the computing power (the nominal hashrate). Note that we assume perfect diversification of miners. Because of the structure of the blockchain (discrete payouts at each block creation), the actual revenue of any cluster of hashrate will be awarded in big lump sum whenever a block is found, which follows a poisson distribution. Also the difficulty of the mining puzzle self-adjusts in a discrete fashion as well. However, we argue that those frictions only have an impact in the short run and can be diversified away\footnote{See for instance the use of mining pools, \citeasnoun{schrijvers2016incentive}, \citeasnoun{fisch2017socially}, and \citeasnoun{cong2019decentralized}.}. We intend to give a general formulation of the PoW algorithm therefore we assume that one unit of real hashrate yields a revenue of the form $\nicefrac{1}{K}$. 

On the other hand, in order to run the machines miners need to pay for electricity, of which the cost is denoted by $c$. Obviously the revenue diverges when $K$ goes to $0$. To avoid such technicality in the rest of the paper, we assume that one unit of real hashrate yields a revenue of $\nicefrac{1}{K+ \epsilon}$ where $\epsilon > 0$ is a small parameter ($\epsilon << c$). Let us note that $\epsilon$ can be interpreted as the size in real hashrate of a single machine. Note that in this section we assume the revenue and cost of mining are of the same currency. In practice the cost of mining (electricity) is paid in traditional national currency and the reward is in the blockchain's cryptocurrency. This has important implications regarding incentives of miners to behave properly. Section \ref{randomPrice} presents a version of the model with a random exchange rate.

Therefore we deduce that the value function $U$ of a unit of real hashrate is given by

\begin{equation}\label{defU}
U(K) :=  \int_0^{\infty}e^{-(r+ \delta)t} \left(\frac{1}{K_t + \epsilon} - c\right)dt,
\end{equation}

where $(K_t)_{0\leq t \leq \tau^{\delta}}$ is the process which satisfies
\begin{equation}\label{edo}
\begin{cases}
dK_t = -\delta K_t dt + \lambda U(K_t) dt,\\
K_0 = K.
\end{cases}
\end{equation}

If $U$ is smooth, it clearly satisfies 
\begin{equation}
0=-(r+\delta)U+(-\delta K+\lambda U)U'_K+\frac{1}{K+ \epsilon}-c \text{ in } [0,\infty).
\label{masterEq}
\end{equation}

This PDE is known as a "master equation" in the MFG literature, more specifically a master equation in finite state space. The next section deals with the study of such a PDE.

This master equation summarizes all the mechanics of the mining game, from the point of view of one unit of real hashrate. Instead of writing down the agents maximization program (which is hidden inside the term $\lambda U$), we can focus on individual unit of computing power and derive this standalone equation.

\section{Analysis of the master equation and the mining dynamics}
\label{mainRes}

\subsection{Equilibrium characterization}

The study of master equations in finite state space is now a well referenced topic in the so-called monotone setting. However the stationary setting has not been particularly studied, neither the fact that this equation is posed only on the half line and not on the whole real line. All the ideas used in the following proofs are already known in the MFG literature, but we still provide them to highlight some structure of the MFG and to precise that our model is indeed well posed. For more details on master equations in finite state space, we refer to \citeasnoun{lions2007cours} and \citeasnoun{bertucci2019some}.

Let us recall that the general structure known to yield well-posedness for MFG is the monotone structure. The mathematical analysis we provide here relies heavily on this monotone structure. Our main result concerning (\ref{masterEq}) is the following.

\begin{theorem}
\label{thm1}
There exists a unique lipschitz function solution of (\ref{masterEq}). This function is decreasing and satisfies $U(0) \geq 0$ and $\lim_{K \to \infty}U(K) \leq 0$.
\end{theorem}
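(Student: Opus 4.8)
The plan is to read (\ref{masterEq}) along the characteristic flow generated by $K\mapsto-\delta K+\lambda U(K)$, which recasts the problem as the fixed point $U=\Phi(U)$ for $\Phi(U)(K):=\int_0^\infty e^{-(r+\delta)t}\big(\tfrac{1}{K_t+\epsilon}-c\big)\,dt$, where $(K_t)$ solves (\ref{edo}) started at $K$ (reflected at $0$ so that the half-line is invariant; the reflection will turn out to be inactive at any genuine solution), and then to exploit the monotone structure --- $K\mapsto\tfrac1{K+\epsilon}$ decreasing --- both for compactness in the existence part and for uniqueness. \textbf{Step 1 (reformulation).} If $U$ is Lipschitz then $f(K):=-\delta K+\lambda U(K)$ is Lipschitz, so the (reflected) ODE (\ref{edo}) is well posed, and the crude bound on $U$ read off from the equation keeps the trajectories in a fixed compact interval. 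Along $(K_t)$, (\ref{masterEq}) reads $\tfrac{d}{dt}\big(e^{-(r+\delta)t}U(K_t)\big)=e^{-(r+\delta)t}\big(c-\tfrac{1}{K_t+\epsilon}\big)$; integrating on $[0,\infty)$, the boundary term vanishing, gives precisely (\ref{defU})--(\ref{edo}), i.e. $U=\Phi(U)$. Conversely a Lipschitz fixed point satisfies the attached dynamic programming relation, hence solves (\ref{masterEq}) wherever it is differentiable, and classically --- by the standard bootstrap --- away from the zeros of $-\delta K+\lambda U$. So it suffices to find a unique Lipschitz fixed point of $\Phi$.

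\textbf{Step 2 (a priori estimates and monotonicity).} From $0<\tfrac1{K_t+\epsilon}\le\tfrac1\epsilon$ the representation gives $-\tfrac{c}{r+\delta}\le U\le\tfrac{1/\epsilon-c}{r+\delta}$; sending the initial state to $+\infty$ and using dominated convergence ($\tfrac1{K_t+\epsilon}\to0$ pointwise) gives $\lim_{K\to\infty}U(K)=-\tfrac{c}{r+\delta}\le0$. Evaluating (\ref{masterEq}) at $K=0$ yields $U(0)\big((r+\delta)-\lambda U'(0)\big)=\tfrac1\epsilon-c$, so once $U$ is non-increasing ($U'(0)\le0$) and $\epsilon$ is small enough that $\tfrac1\epsilon>c$ we get $U(0)>0$, whence $[0,\infty)$ is invariant (reflection inactive). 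The monotonicity itself comes from non-crossing of solutions of (\ref{edo}): since $f$ is Lipschitz, $K\le K'$ implies $K_t\le K'_t$ for all $t$, hence $\tfrac1{K_t+\epsilon}\ge\tfrac1{K'_t+\epsilon}$ and $U(K)=\Phi(U)(K)\ge\Phi(U)(K')=U(K')$; the apparent loop with ``$[0,\infty)$ invariant'' is broken by running this comparison on the reflected flow, which is still order-preserving.

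\textbf{Step 3 (existence).} Let $\mathcal C$ be the convex set of non-increasing, $L_0$-Lipschitz functions $U:[0,\infty)\to\big[-\tfrac{c}{r+\delta},\tfrac{1/\epsilon-c}{r+\delta}\big]$ with $L_0:=\tfrac1{\epsilon^2(r+2\delta)}$; by Arzel\`a--Ascoli it is compact for local uniform convergence. The key point is that a non-increasing input yields a contracting flow: for $K\le K'$, non-crossing gives $K_t\le K'_t$, and then $\tfrac{d}{dt}(K'_t-K_t)=-\delta(K'_t-K_t)+\lambda\big(U(K'_t)-U(K_t)\big)\le-\delta(K'_t-K_t)$, so $0\le K'_t-K_t\le e^{-\delta t}(K'-K)$. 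Hence $|\Phi(U)(K)-\Phi(U)(K')|\le\tfrac1{\epsilon^2}\int_0^\infty e^{-(r+\delta)t}|K_t-K'_t|\,dt\le L_0|K-K'|$, and $\Phi(U)$ is non-increasing; together with the obvious bounds this gives $\Phi(\mathcal C)\subseteq\mathcal C$, while continuity of $\Phi$ for local uniform convergence follows from continuous dependence of (\ref{edo}) on $f$ plus dominated convergence. The Schauder--Tychonoff theorem then yields a fixed point, which by Steps 1--2 is a Lipschitz solution of (\ref{masterEq}), decreasing, with $U(0)\ge0$ and $\lim_{K\to\infty}U(K)\le0$. Note the uniform Lipschitz bound $L_0$ requires no restriction on the parameters.

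\textbf{Step 4 (uniqueness; main obstacle).} Let $U_1,U_2$ be two Lipschitz solutions; by Steps 1--2 each is $C^1$, decreasing, and tends to $-\tfrac{c}{r+\delta}$ at infinity. Set $W:=U_1-U_2$; subtracting the equations and rearranging gives $\big(-\delta K+\lambda U_1\big)W'+\lambda\,W\,U_2'=(r+\delta)W$. Since $W\to0$ at infinity, if $\sup W>0$ it is attained at some finite $K^\star$. If $K^\star>0$ then $W'(K^\star)=0$, so the identity reads $\lambda W(K^\star)U_2'(K^\star)=(r+\delta)W(K^\star)$ --- impossible, the left side being $\le0$ (as $U_2'\le0$) and the right side $>0$. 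If $K^\star=0$ then $W'(0)\le0$ and $U_1(0)\ge0$, so the left side is again $\le0<(r+\delta)W(0)$, a contradiction. Hence $U_1\le U_2$ and, by symmetry, $U_1=U_2$. I expect the genuine difficulties to be the ones attached to the equation living on the half-line $[0,\infty)$ rather than on $\mathbb R$: making the characteristic flow stay in the domain (equivalently, pinning down the correct behaviour at $K=0$), and controlling the degeneracy of (\ref{masterEq}) where $-\delta K+\lambda U$ vanishes when upgrading a fixed point to a genuine $C^1$ solution; by contrast, the uniform Lipschitz estimate driving the existence argument is essentially automatic once one notices that a non-increasing input produces a contracting flow.
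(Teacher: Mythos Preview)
Your proof is correct and takes a genuinely different route from the paper's. For existence, the paper establishes the Lipschitz a~priori estimate by a pure PDE argument: setting $V=U'$ and showing, via the maximum principle applied to $W:=V+\beta V^2$ with $\beta\le\epsilon^2\delta/2$, that $W\le 0$, whence $-\beta^{-1}\le V\le 0$. You instead exploit the representation formula (\ref{defU}) directly, obtaining the Lipschitz bound $L_0=\tfrac{1}{\epsilon^2(r+2\delta)}$ from the contraction of the characteristic flow under a non-increasing input, and then close by Schauder; your constant is in fact a little sharper than the paper's $2/(\epsilon^2\delta)$. For uniqueness, the paper introduces the two-variable auxiliary function $W(x,y)=(U(x)-V(y))(x-y)$ on $[0,\infty)^2$ and shows $W\le 0$ by a maximum principle --- this is the canonical MFG monotonicity device, and it simultaneously yields the decreasing property (take $V=U$). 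Your one-variable maximum-principle argument on $U_1-U_2$ is more elementary for this particular equation, but the two-variable version is what transports cleanly to the extensions in Section~\ref{exts} (common noise, two populations), where the coupling through the state variable makes a direct subtraction less transparent. One small caution in your Step~4: you invoke $C^1$ regularity at the interior maximum $K^\star$, yet Step~1 only guarantees it away from the zero of the drift; the fix is immediate --- if $K^\star$ happens to be the zero of $-\delta K+\lambda U_1$, the term $(-\delta K+\lambda U_1)W'$ tends to $0$ along any sequence of differentiability points, and the two stationary points of $U_1$ and $U_2$ cannot both equal $K^\star$ when $W(K^\star)>0$, since that would force $U_1(K^\star)=U_2(K^\star)$.
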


\begin{remark}
Let us note that no boundary condition is neither imposed nor needed at $\{ K = 0 \}$. This is due to the fact that the set $\{K \geq 0\}$ is invariant under the dynamic of the problem.
\end{remark}

\begin{proof}

We begin with the uniqueness and decreasing properties. Let us assume that there exists $U$ and $V$ lipschitz functions solutions of (\ref{masterEq}) which satisfy the prescribed behavior at $0$ and infinity. Let us define the function $W : [0,\infty)^2 \to \mathbb{R}$ by 
\begin{equation}
W(x,y) = (U(x) - V(y))(x-y).
\end{equation}

The function $W$ is a solution of 
\begin{equation}
\begin{aligned}
(r + \delta) W = & (- \delta x + \lambda U(x))\partial_x W + (- \delta y + \lambda V(y))\partial_y W\\ 
& + \left(\frac{1}{x + \epsilon} - \frac{1}{y + \epsilon}\right)(x-y) + \delta W - \lambda (U(x) - V(y))^2.
\end{aligned}
\end{equation}

A simple maximum principle type result, that we do not detail here, yields that for all $x, y \geq 0$ 
\begin{equation}
W(x,y) \leq 0.
\end{equation}

This implies first that $U = V$. Indeed, if there is $\bar{K}$ such that $U(\bar{K})\ne V(\bar{K})$, then the sign of $W$ should change around $(\bar{K},\bar{K})$, which is not the case. Therefore $U= V$, from which we deduce that $U$ is decreasing.

Concerning the existence and boundary behavior of the solution of the equation, we are not going to provide a complete proof of this facts, as they are classical in the PDE literature. We only prove the main argument of the proof, which is a lipschitz a priori estimate for the solution of (\ref{masterEq}). Let $U$ be a smooth solution of (\ref{masterEq}), denote its derivative $V= \partial_K U$ and let us consider the function $W: [0,\infty) \to \mathbb{R}$ by
\begin{equation}
W(K) = V(K) + \beta V^2(K),
\end{equation}

for some $\beta > 0$ to be defined later on. Our aim is to show that there exists $\beta$ such that $W\leq 0$ on $[0,\infty)$. Let us remark that $W$ satisfies
\begin{equation}
\begin{aligned}
(r + \delta)W = &(-\delta K + \lambda U) \partial_K W  - \frac{1}{(K + \epsilon)^2} + 2 (-\delta + \lambda V)W - V(-\delta + \lambda V)\\
& - (r + \delta )\beta V^2 - 2 \beta V \frac{1}{(K + \epsilon)^2}.
\end{aligned}
\end{equation}

Let us now take $\beta$ such that 
\begin{equation}
\beta \leq \frac{\epsilon^2\delta}{2}.
\end{equation}

We then obtain that
\begin{equation}
\begin{aligned}
(r + \delta)W -  \partial_K W (-\delta K + \lambda U) -  2 (-\delta + \lambda V)W \leq- \frac{1}{(K + \epsilon)^2} \hspace{1cm}\\
\hfill
-  \lambda V^2 - (r + \delta )\beta V^2.
\end{aligned}
\end{equation}

Once again, a maximum principle result type yields that $W \leq 0$ and thus that
\begin{equation}
-\beta^{-1} \leq V \leq 0,
\end{equation}

which is the announced lipschitz a priori estimate.
\end{proof}

\begin{remark}
The monotone structure of the model is here the fact that the reward $K \to 1/(K + \epsilon)$ is decreasing in $K$ and the function which yields the derivative of the hashrate is an increasing function of $U$, i.e. $U \to -\delta K + \lambda U$ is increasing in $U$. 
\end{remark}

The existence and uniqueness of the function $U$, which is the value function of one unit of real hashrate, translate to the same properties for the game equilibrium. This proves that in this context proof-of-work is well-posed. A strength of our approach is that similar kind of arguments can be made to extend this existence and uniqueness results to variation of the master equation (\ref{masterEq}). This will be presented in section \ref{exts}.

Uniqueness of equilibrium being shown, we now explore the properties of the equilibrium real hashrate, which is a crucial for the stability of a blockchain. From (\ref{dynK}) and the fact that $U$ is decreasing, we deduce that as $K_t$ increases, $\dot{K_t}$ decreases. To further investigate the properties of $\dot{K}$ we are interested in the existence of a stationary state. The following proposition proves that there always exists one.

\begin{proposition}
\label{statState}
When the master equation is defined as in (\ref {masterEq}) and the dynamic of the real hashrate, $(K_t)_{t\geq0}$, follows (\ref{dynK}), there always exists a stationary state $K_*$, which can be computed explicitly in terms of the parameters of the model. Moreover, for any $K_0$,
\[
\lim_{t \to \infty} K_t = K_*
\]
\end{proposition}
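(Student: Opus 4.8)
The plan is to exploit the two facts we already have in hand: the dynamic $\dot K_t = -\delta K_t + \lambda U(K_t)$ from \eqref{dynK}, and the fact (Theorem \ref{thm1}) that $U$ is continuous, decreasing, with $U(0)\geq 0$ and $\lim_{K\to\infty}U(K)\leq 0$. First I would define $g(K) := -\delta K + \lambda U(K)$ and observe that a stationary state is exactly a zero of $g$ on $[0,\infty)$. Since $U$ is decreasing and $-\delta K$ is strictly decreasing, $g$ is strictly decreasing on $[0,\infty)$; moreover $g(0) = \lambda U(0) \geq 0$ and $g(K) \leq -\delta K + \lambda U(K) \to -\infty$ as $K\to\infty$ (using $\lim U \leq 0$ and the $-\delta K$ term). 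By the intermediate value theorem there is a unique $K_*\geq 0$ with $g(K_*)=0$, which gives both existence and uniqueness of the stationary state. For the explicit formula, at $K_*$ we have $U(K_*) = \delta K_*/\lambda$; plugging $K=K_*$ into the master equation \eqref{masterEq} the term multiplying $U'_K$ vanishes (it is exactly $g(K_*)$), leaving $0 = -(r+\delta)\,\delta K_*/\lambda + 1/(K_*+\epsilon) - c$, a scalar equation in $K_*$ alone. Clearing denominators yields a quadratic in $K_*$, whose relevant (nonnegative) root is the closed-form expression for $K_*$ in terms of $r,\delta,\lambda,\epsilon,c$.

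For the convergence statement, I would argue directly from the sign of $g$. Because $g$ is strictly decreasing with its single zero at $K_*$, we have $g(K) > 0$ for $K < K_*$ and $g(K) < 0$ for $K > K_*$; equivalently $\dot K_t > 0$ whenever $K_t < K_*$ and $\dot K_t < 0$ whenever $K_t > K_*$. Hence $K_*$ is a globally attracting equilibrium of the scalar autonomous ODE: for any $K_0 < K_*$ the solution is increasing and bounded above by $K_*$ (the set $[0,K_*]$ is forward-invariant by the sign condition), so it converges to a limit $\ell \leq K_*$, which must satisfy $g(\ell)=0$, forcing $\ell = K_*$; symmetrically for $K_0 > K_*$ the solution decreases to $K_*$; and $K_0 = K_*$ is the constant solution. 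One should note in passing that global existence of the trajectory on $[0,\infty)$ is not an issue: $U$ is Lipschitz so $g$ is Lipschitz, giving local existence/uniqueness, and the forward-invariance of the compact interval between $K_0$ and $K_*$ (together with forward-invariance of $[0,\infty)$, already remarked after Theorem \ref{thm1}) rules out blow-up.

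I do not expect a serious obstacle here — the result is essentially the phase-line analysis of a monotone scalar ODE. The only mild subtlety is that $U$ is only known to be Lipschitz, not $C^1$, so one cannot differentiate $g$; but monotonicity of $U$ is all that is needed for the sign argument, so this causes no trouble. The one genuinely computational point is checking that the quadratic for $K_*$ has exactly one root in $[0,\infty)$ and identifying it; this follows because the left-hand side of $-(r+\delta)\delta K/\lambda + 1/(K+\epsilon) - c$ is itself strictly decreasing in $K$ on $[0,\infty)$, positive is possible at $K=0$ only if $1/\epsilon > c$ (guaranteed by the standing assumption $\epsilon \ll c$... actually $\epsilon$ small makes $1/\epsilon$ large, so $1/\epsilon - c > 0$), and tends to $-\infty$, so the scalar equation has a unique nonnegative solution, consistent with the IVT argument above. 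Thus the two approaches — zero of $g$, and root of the scalar master equation at $K_*$ — agree, and the explicit formula is obtained by the quadratic formula.
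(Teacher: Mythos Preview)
Your proposal is correct and follows essentially the same approach as the paper: the paper also characterizes the stationary state via $\delta K_* = \lambda U(K_*)$, uses the vanishing of the transport coefficient at $K_*$ in the master equation to obtain the quadratic and the explicit formula, and attributes the convergence of trajectories to the decreasing property of $U$ together with the existence of $K_*$. Your argument is in fact more carefully spelled out (the intermediate value step, the phase-line analysis, and the Lipschitz/global-existence remark), but the logical route is the same.
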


\begin{proof}
\label{proofStationaryState}
The existence and uniqueness of the stationary state is an immediate consequence of the facts : $U$ is decreasing and $U(0) \geq 0$. Indeed stationary states are characterized by 
\[
\dot{K}(K_*)=0\hs\hs\iff\hs\hs\delta K_*=\lambda U(K_*).
\]

We now compute the value of $K_*$. By the definition of the master equation (\ref{masterEq}) we have that
\[
U(K_*)=\left(\frac{1}{K_* + \epsilon}-c\right)(r+\delta)^{-1}.
\]

Therefore, by combining the two we can write
\[
\delta K_*=\frac{\lambda}{r+\delta}\left(\frac{1}{K_* + \epsilon}-c\right).
\]

Which can be rewritten as
\[
0=\delta K_*^2+K_*\left(\delta \epsilon +\frac{c\lambda}{r+\delta}\right)+  \frac{\lambda}{r+\delta}(c\epsilon - 1).
\]

This second order polynomial equation has always a unique strictly positive solution (since $\epsilon << c$) given by
\begin{equation}
\label{equstat}
K_*=\frac{\sqrt{(\delta \epsilon - \frac{c \lambda}{r + \delta})^2 + 4 \frac{\delta \lambda}{r + \delta} } - \delta \epsilon - \frac{c\lambda}{r + \delta}}{2 \delta}>0.
\end{equation}

Concerning the convergence of the induced trajectories, let us state that it is an immediate consequence of the existence of a stationary state and the decreasing property of $U$. 
\end{proof}

\begin{remark}
Let us comment on the fact that the main idea of this proof is to make an extensive use of some monotonicity properties of the situation we are modeling, namely that $1/(K + \epsilon)$ is a decreasing function of $K$ and $\partial_t K$ is an increasing function of $U$.
\end{remark}

The structure of the mining game produces a stationary state at which new units of real hashrate continuously enter the mining game but only to compensate for the technological progress. Therefore the real hashrate is constant, and the nominal hashrate, $P$, increases exponentially due to the technological progress. More generally, in equilibrium the nominal hashrate follows the rate of technological progress.

We also show that the dynamics of the real hashrate will converge toward the stationary state $K_*$, regardless of where it starts. If the real hashrate is higher than $K_*$, there are \enquote{too many} machines, so the value of one unit of real hashrate decreases (because one unit becomes less important) so miners have less incentives to buy new machines, therefore the real hashrate decreases. If the real hashrate is lower than $K_*$, there are \enquote{too few} machines, so the value of one unit of real hashrate increases (because one unit becomes more important), so miners have more incentives to buy new machines, therefore the real hashrate increases.

The existence and convergence results of a stationary state for the real hashrate mean that the security of the blockchain is constant over the long run. Recall that the real hashrate represents the total number of machines which is directly linked to the cost of reconstructing the whole nominal hashrate with the current technology. Because at the stationary state, the real hashrate is constant, the cost is also constant. The nominal hashrate, which is the observed one, will increase but just to compensate for the depreciation due to the technological progress. The security against external 51\% attack is constant.

\subsection{Analysis of the stationary state}
\label{comparativeStatics}

Now that convergence toward a stationary state is established, we perform some comparative statics to further investigate its properties. 

As explained above, the real hashrate represents the number of current computing devices needed to rebuild the entire nominal hashrate, it then has a direct link to the cost of a 51\% attack and therefore to the security of the blockchain\footnote{We refer only to the cost of buying the hardware needed to perform a 51\% attack. Keep in mind that the machines will then have to be supplied with electricity. Moreover, as explained in the introduction, there are two types of 51\% attack : $i)$ an internal attack (or collusion attack) and $ii)$ an external attack. Here we only refer to external 51\% attack.}. Recall that the price of one unit of hashrate is not taken into account here, which makes the security analysis indirect. Everything else being equal, a higher hashrate requires more ressources to be reproduced.

From the traditional industrial organization literature, the price is likely subject to a minimum value that corresponds to the marginal production cost (in case the market is perfectly competitive) plus some rent that may arise if the market is not perfectly competitive. In the case of a monopoly, the sole manufacturer will likely extract all the value from the miners and therefore price the unit of hashrate at $U_*$. From the previous result, we know that in equilibrium the value of one unit of real hashrate is at a stationary state. From (\ref{dynK}) we know this is equivalent to
\begin{equation}
U_*=\frac{\delta K_*}{\lambda}.
\end{equation}

Therefore we can write the total value generated by all miners in equilibrium, that can be defined by $\Pi_*=K_*U_*$. This corresponds to the maximum rent that could be extracted by manufacturers, and it corresponds to miners making no profit. To assess whether this rent will be kept by miners or extracted by computing chips manufacturers, we would need to precisely model the manufacturing market, which we leave to future research.

The stationary real hashrate can be analyzed as a function of the model parameters with (\ref{equstat}), while the total value generated by miners can be expressed as
\begin{equation}
\Pi_*(r, c, \delta, \lambda)=\frac{\delta K_*^2(r, c, \delta, \lambda)}{\lambda}.
\end{equation}

Figure \ref{comparativeLambda} presents the comparative statics of the quantities of interest with respect to the level of friction $\lambda$, while figure \ref{comparativeDelta} deals with variations in the rate of technological progress $\delta$. All this analysis is made all else being equal. For the calibration of the parameters we choose an energy cost of $c=\$0.02$ per kWh, a discount rate of $r=5\%$, a rate of technological progress of $\delta=20\%$, and a level of friction of $\lambda=1$. We also assume $\epsilon=0$. This calibration is not meant to produce quantitative results but rather to qualitatively present the intuition of the model.

\begin{figure}[h]
\center
\includegraphics[scale=0.5]{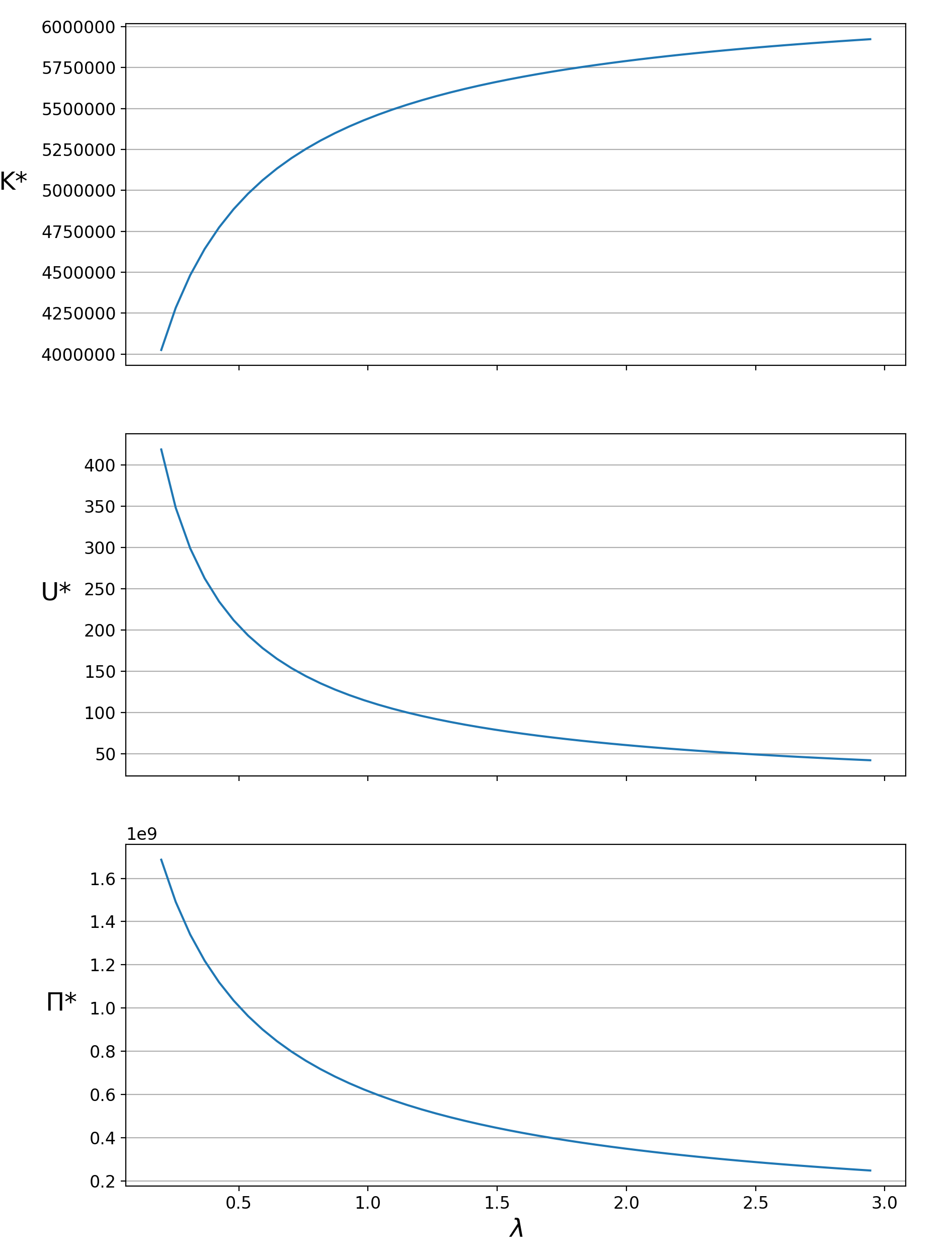}
\caption{Comparative statics of equilibrium quantities with respect to the level of friction $\lambda$ - more precisely $\lambda$ is the inverse of market frictions}
\label{comparativeLambda}
\end{figure}

\begin{figure}[h]
\center
\includegraphics[scale=0.5]{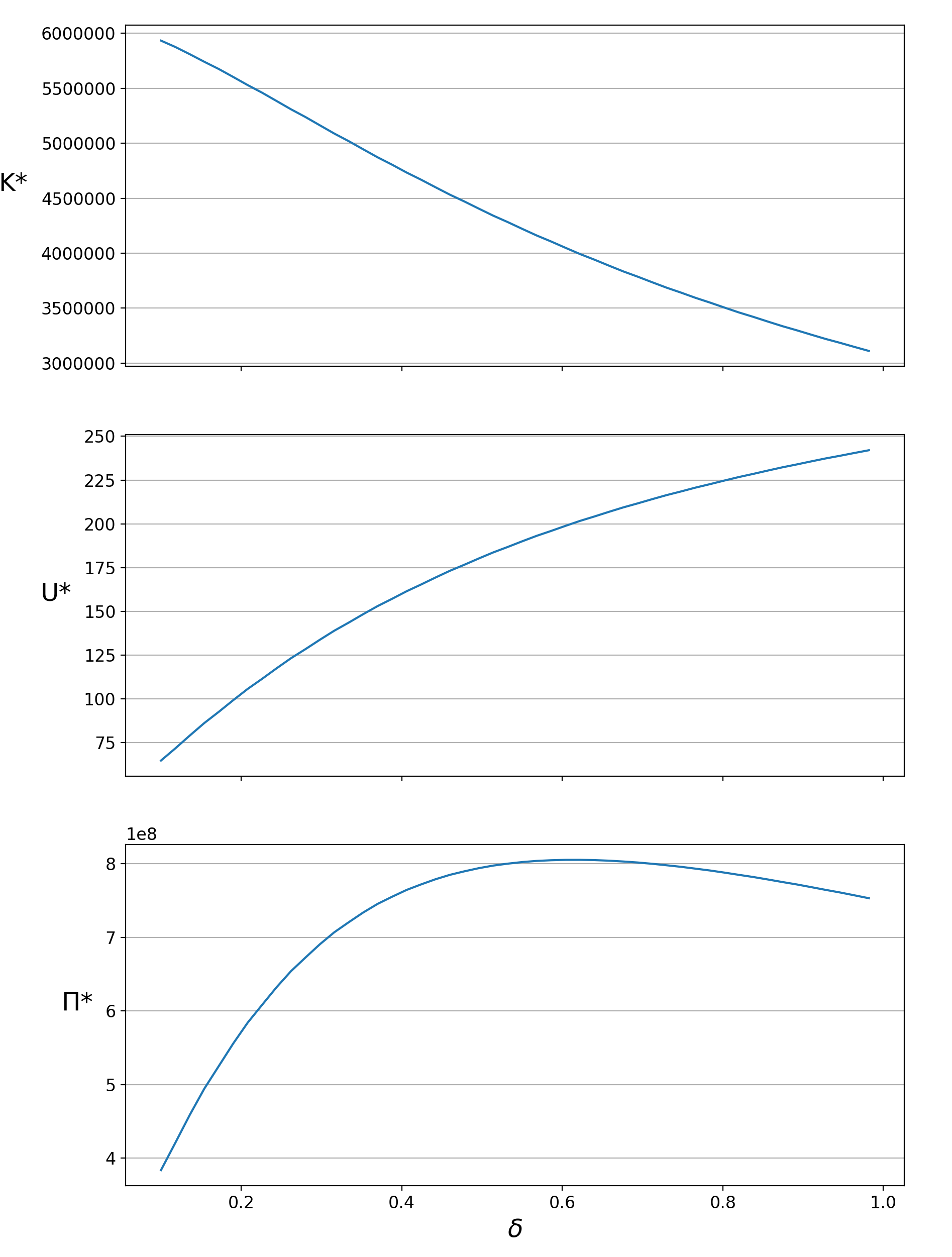}
\caption{Comparative statics of equilibrium quantities with respect to the technological progress $\delta$}
\label{comparativeDelta}
\end{figure}

In figure \ref{comparativeLambda} we see that as the level of friction decreases ($\lambda$ increases), the total value generated by the miners ($\Pi_*$) decreases. There are two contradictory effects. On the one hand, a higher $\lambda$ implies a higher stationary state for the real hashrate. When it is easier for miners to buy and sell hashing machines, they buy more in equilibrium. This means that the security of the blockchain increases. On the other hand, miners are able to enter and exit the market with less constraints, and they also anticipate future miners to easily join the industry if there is still a profit opportunity, and as a result the price of one unit of real hashrate ($U_*)$ decreases. In section \ref{2p}, we analyze an extension which corresponds to the limit case of $\lambda\rightarrow\infty$, in which we show that at the stationary state $U_*=0$ and the real hashrate tends toward $K_*=\frac{1}{c}-\epsilon$.

Let's now turn to the analysis of variations in the rate of technological progress $\delta$. In figure \ref{comparativeDelta}, with the parameters we calibrated, we see that the total value generated by miners seems to reach a maximum at around $\delta=60\%$. We see in the upper panel that the real hashrate, and hence the security of the blockchain, decreases as $\delta$ increases. This is because the rate of technological progress makes recent machines more performant than older machines for the same cost. Because machines will depreciate more, miners have less incentives to buy them in the first place.

The effect on the value of one unit of real hashrate is reversed. There are two effects at play here. First, a higher technological progress induces a lower real hashrate which makes the value of the hashrate to increase, because it is easier to mine. Second, the units of real hashrate depreciate more which will drive the value down. Clearly in the middle panel of figure \ref{comparativeDelta}, we see that the first effect dominates for the value of the hashrate $U_*$. However, when we assess the total value generated by miners (in the lower panel) the effect is not clear. For low values of technological progress, the effect on the hashrate dominates and the total value increases with $\delta$. However, at some point the rate of technological progress depreciates too much the current machines and this effect starts to dominate. For high values of $\delta$ the total value generated by the miners decreases with the technological progress. From the point of view of the miners, some level of technological progress is desirable but not too much.

\section{Natural extensions of the model}
\label{exts}

In this section we describe some natural extensions of our model. All theses extensions could be mixed together and we present most of them separately for the sake of clarity. An important objective of this section is to highlight the fact that the monotone structure of the MFG plays a key role in the well-posedness of these models, and that our approach allows for a great deal of flexibility.

\subsection{Random exchange rate between the reward and the cost}
\label{randomPrice}

In the version of the model we presented in section \ref{mod}, the cost paid by the players and the reward they gain from mining the blockchain were assumed to be in the same currency, or equivalently, in currencies with a constant exchange rate. In this section we assume that this exchange rate is random and given exogenously. More precisely we assume that the reward $1/(K_t+\epsilon)$ at time $t$ is replaced by

\begin{equation}
\frac{g(P_t)}{K_t+\epsilon},
\end{equation}

where $g$ is a given positive smooth function and $P_t$ evolves according to 

\begin{equation}
\label{dynP}
dP_t = \alpha(P_t)dt + \sqrt{2\nu}dW_t,
\end{equation}

where $\alpha$ is a lipschitz function, $\nu > 0$ and $(W_t)_{t\geq 0}$ is a standard brownian motion on a probability space $(\Omega, \mathcal{A}, \mathbb{P})$. In this context, $(P_t)_{t\geq 0}$ is a process which captures the randomness of this exchange rate. Note that with this formulation, the value of the reward in the national currency $g(P_t)$ can follow any type of random process. In particular, if $\alpha$ is constant and $g$ is the exponential function then the process $\left(g(P_t)\right)_{t\geq0}$ is a geometric brownian motion.

This type of models falls under the terminology of MFG with common noise. In such a context, the value function $U$ of a unit of real hashrate depends on the value of the process $(P_t)_{t\geq 0}$. The analogous of (\ref{defU}) and (\ref{edo}) becomes here

\begin{equation}
U(K,p) := \mathbb{E}\left[ \int_0^{\infty} e^{-(r+\delta)t} \left(  \frac{g(P_t)}{K_t+\epsilon} - c \right)  \right],
\end{equation}

where $(K_t)_{t \geq 0}$ and $(P_t)_{t \geq 0}$ evolves according to 

\begin{equation}\label{edo2}
\begin{cases}
dP_t = \alpha(P_t)dt + \sqrt{2\nu}dW_t,\\
dK_t = - \delta K_t dt + \lambda U(K_t,P_t)dt, \\
K_0 = K, P_0 = p.
\end{cases}
\end{equation}

In this case, the master equation satisfied by $U$ is

\begin{equation}\label{meqp}
- (r + \delta) U + (-\delta K + \lambda U)\partial_K U + \alpha \partial_p U + \nu \partial_{pp} U + \frac{g(p)}{K + \epsilon} - c= 0 \text{ in } [0,\infty)\times \mathbb{R}.
\end{equation}

In this case of a non constant exchange rate, the essential structure of monotonicity has not been perturbed and the equation (\ref{meqp}) remains well posed. The following result holds true.

\begin{theorem}\label{thmp}
Assume that there exists $A> 0$ such that for all $p \in \mathbb{R}$
\begin{equation}
\epsilon c \leq g(p) \leq A.
\end{equation}
Then, there exists a unique solution $U$ of (\ref{meqp}), it is decreasing and uniformly lipschitz in $K$.
\end{theorem}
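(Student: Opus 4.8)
The strategy is to reproduce the architecture of the proof of Theorem~\ref{thm1}: the common noise only introduces the second order operator $\nu\partial_{pp}+\alpha\partial_p$ in the ---now unbounded--- variable $p$, which is uniformly elliptic there and hence harmless for every maximum principle argument, while the monotone structure of the game is untouched (the reward $K\mapsto g(p)/(K+\epsilon)$ is decreasing in $K$ for each fixed $p$, using here $g\ge\epsilon c>0$, and the drift $U\mapsto-\delta K+\lambda U$ is increasing in $U$). The genuinely new ingredients are an a priori $L^\infty$ bound on $U$, where the upper bound $g\le A$ enters, and the treatment of the non-compact domain $[0,\infty)\times\mathbb{R}$ by penalization at infinity. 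For the first, $\epsilon c\le g\le A$ gives $-c\le g(p)/(K+\epsilon)-c\le A/\epsilon-c$ on the whole domain, so evaluating~(\ref{meqp}) at an extremum of $U$ (the boundary $\{K=0\}$ being an inflow boundary for the $K$-drift, as in the remark after Theorem~\ref{thm1}, hence carrying no condition) and using a penalization $\pm\eta(K^2+p^2)$ yields $-c/(r+\delta)\le U\le(A/\epsilon-c)/(r+\delta)$. In particular $U$ is bounded; smoothness of any such solution follows from the nondegeneracy $\nu>0$ by interior regularity theory, or one argues throughout on a regularized equation.

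\emph{Uniqueness and monotonicity in $K$.} Let $U,V$ be two bounded solutions of~(\ref{meqp}); taking $V=U$ will give the monotonicity of a single solution. Define on $[0,\infty)^2\times\mathbb{R}$
\[
W(x,y,p)=\big(U(x,p)-V(y,p)\big)(x-y).
\]
A direct computation, identical to the one in the proof of Theorem~\ref{thm1} up to the extra linear terms in $p$, gives
\[
\begin{aligned}
r\,W={}&(-\delta x+\lambda U)\,\partial_x W+(-\delta y+\lambda V)\,\partial_y W+\alpha\,\partial_p W+\nu\,\partial_{pp}W\\
&-\lambda\big(U(x,p)-V(y,p)\big)^2+g(p)\Big(\frac{1}{x+\epsilon}-\frac{1}{y+\epsilon}\Big)(x-y),
\end{aligned}
\]
and the last two terms are $\le0$ since $g\ge0$ and $K\mapsto1/(K+\epsilon)$ is decreasing. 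Thus $W$ is a subsolution of a linear operator with zeroth order coefficient $-r<0$; a maximum principle ---with a penalization $-\eta(x^2+y^2+p^2)$ to ensure the supremum is attained, and noting that $\{x=0\}\cup\{y=0\}$ is an inflow boundary for the $(x,y)$-drifts--- yields $W\le0$ on the whole domain. With $V=U$ this says precisely that $U$ is nonincreasing in $K$; with $U,V$ two solutions, examining the sign of $W$ near the diagonal as in Theorem~\ref{thm1} forces $U=V$.

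\emph{Uniform Lipschitz estimate in $K$.} Let $U$ be a smooth solution, set $V:=\partial_K U$ (so $V\le0$ by the previous step) and $W:=V+\beta V^2$ with $\beta>0$ to be chosen. Differentiating~(\ref{meqp}) in $K$ and substituting, $W$ solves the very equation appearing in the proof of Theorem~\ref{thm1}, with $1/(K+\epsilon)^2$ replaced by $g(p)/(K+\epsilon)^2\le A/\epsilon^2$ and with the additional terms $\alpha\,\partial_p W+\nu\,\partial_{pp}W-2\nu\beta(\partial_p V)^2$. Choosing
\[
\beta\le\frac{\delta\,\epsilon^2}{2A},
\]
so that $\delta-2\beta g(p)/(K+\epsilon)^2\ge0$, the reward term and the term $V\big(\delta-2\beta g(p)/(K+\epsilon)^2\big)$ are controlled exactly as in Theorem~\ref{thm1} (using $V\le0$), while the new terms $\alpha\,\partial_p W+\nu\,\partial_{pp}W-2\nu\beta(\partial_p V)^2$ are $\le0$ at a maximum of $W$. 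The maximum principle of Theorem~\ref{thm1}, again with penalization at infinity, then gives $W\le0$, that is $-\beta^{-1}\le\partial_K U\le0$ on $[0,\infty)\times\mathbb{R}$, which is the announced uniform Lipschitz bound in $K$.

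\emph{Existence, and the main obstacle.} With the bounds above in hand, existence follows by a now-standard argument: one defines the candidate $U$ through the stochastic representation ---the analog of~(\ref{defU})---
\[
U(K,p)=\mathbb{E}\Big[\int_0^\infty e^{-(r+\delta)t}\Big(\frac{g(P_t)}{K_t+\epsilon}-c\Big)dt\Big],\qquad(K_t,P_t)\text{ solving }(\ref{edo2}),
\]
which is a fixed-point problem on the space of bounded functions that are Lipschitz in $K$ whose well-posedness is controlled by the a priori estimates (alternatively a continuation argument in $\lambda$, or a vanishing-viscosity regularization $+\mu\partial_{KK}$, $\mu\to0$), the $p$-regularity that makes $U$ a classical solution of~(\ref{meqp}) being supplied by $\nu>0$ and the smoothness of $g$ and $\alpha$ via parabolic/elliptic regularity. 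The delicate point ---which, consistently with its treatment of Theorem~\ref{thm1}, the paper treats as classical--- is precisely this last construction of a classical solution on an unbounded domain for an operator that is degenerate in $K$ and uniformly elliptic only in $p$, together with propagating enough $p$-regularity; the monotone structure itself transfers mechanically, the sole new analytic input being that the reward stays bounded, which is exactly the hypothesis $g\le A$.
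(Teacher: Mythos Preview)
Your proposal is correct and follows essentially the same route as the paper: the auxiliary function $W(x,y,p)=(U(x,p)-V(y,p))(x-y)$ for uniqueness/monotonicity, and $W=\partial_K U+\beta(\partial_K U)^2$ for the Lipschitz bound, with the common-noise operator $\alpha\partial_p+\nu\partial_{pp}$ simply carried along. Your treatment is in fact more complete---you supply the $L^\infty$ bound, the penalization at infinity, and the extra good term $-2\nu\beta(\partial_pV)^2$ coming from $\partial_{pp}(\beta V^2)$---and your choice $\beta\le\delta\epsilon^2/(2A)$ is the right one (the paper's displayed bound $\beta\le\epsilon^2\delta\|g\|_\infty/2$ has $\|g\|_\infty$ on the wrong side, as the needed inequality is $\delta-2\beta g(p)/(K+\epsilon)^2\ge0$).
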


\begin{remark}
The bound from below on $g$ models the fact that, whatever the exchange rate, mining is profitable when $K=0$. Such an assumption could have been replaced by the possibility for the miners to turn off the machines or the impossibility to sell them. We refer to the other extensions for more on this kind of assumptions.
\end{remark}

\begin{proof}
As in the previous case, this result relies heavily on monotonicity assumptions. First, let us mention that from the assumption we made on $g$, the domain $[0,\infty)\times \mathbb{R}$ is invariant.

The uniqueness and decreasing properties follow from the maximum principle applied on the PDE satisfied by $W: [0,\infty)^2\times \mathbb{R}$ defined by

\begin{equation}
W(x,y,p) = (U(x,p) - V(y,p))(x-y),
\end{equation}

for $U$ and $V$ two solutions of (\ref{meqp}). 

Concerning the lipschitz property, we introduce the function $W$ defined by

\begin{equation}
W(K,p) = \partial_K U(K,p) + \beta (\partial_K U(K,p))^2;
\end{equation}

for $\beta > 0$ such that 
\begin{equation}
\beta \leq \frac{\epsilon^2 \delta \|g\|_{\infty}}{2},
\end{equation}
the function $W$ satisfies 
\begin{equation}
\begin{aligned}
(r + \delta)W - \partial_K W (-\delta K + \lambda U) - 2 (-\delta + \lambda \partial_K U)W - \alpha \partial_p W - \nu \partial_{pp}W \leq0.
\end{aligned}
\end{equation}

As in the one dimensional case, we conclude by the maximum principle that $W \leq 0$ and thus that $\partial_K U$ is bounded. 
\end{proof}

Unlike the one dimensional case, there is no convergence of the real hashrate to a stationary equilibrium, this is clearly due to the randomness. However, from the monotonicity of $U$ with respect to $K$ we can establish the following :

\begin{proposition}
Under the assumptions of the previous theorem, given the solution $U$ of (\ref{meqp}), there is a smooth function $K_* : \mathbb{R} \to \mathbb{R}$ such that if $(K_t,P_t)_{t \geq 0}$ is given by (\ref{edo2}), then 
\begin{equation}
\begin{cases}
K_t \leq K_*(P_t) \Rightarrow dK_t \geq 0,\\
K_t \geq K_*(P_t) \Rightarrow dK_t \leq 0.
\end{cases}
\end{equation}
\end{proposition}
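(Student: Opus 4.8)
The plan is to reduce the statement to an elementary analysis of the one–dimensional map $K\mapsto-\delta K+\lambda U(K,p)$ at fixed $p$, using only the monotonicity of $U$ in $K$ from Theorem \ref{thmp} together with the sign of this map at the two ends of the half–line. Set $\Phi(K,p):=-\delta K+\lambda U(K,p)$, so that the dynamics in (\ref{edo2}) read $dK_t=\Phi(K_t,P_t)\,dt$; in particular the sign of $dK_t$ is exactly the sign of $\Phi(K_t,P_t)$, and the whole proposition amounts to the assertion that for each $p$ the function $\Phi(\cdot,p)$ vanishes at a single point $K_*(p)\geq 0$, is nonnegative to its left and nonpositive to its right, and that $p\mapsto K_*(p)$ is smooth.

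First I would record the qualitative features of $\Phi(\cdot,p)$. Since $U(\cdot,p)$ is nonincreasing and Lipschitz (Theorem \ref{thmp}) and $-\delta K$ is strictly decreasing, $\Phi(\cdot,p)$ is continuous and strictly decreasing on $[0,\infty)$. At $K=0$ one has $\Phi(0,p)=\lambda U(0,p)\geq 0$: this is precisely the invariance of $[0,\infty)\times\mathbb{R}$ under (\ref{edo2}) already used in the proof of Theorem \ref{thmp} (the flow cannot leave $\{K\geq 0\}$ at the boundary $\{K=0\}$ unless $\Phi(0,\cdot)\geq 0$), and it can equivalently be obtained from a maximum principle applied to (\ref{meqp}) restricted to $\{K=0\}$, exactly as $U(0)\geq 0$ was obtained in Theorem \ref{thm1}. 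At the other end, the running payoff $g(p)/(K+\epsilon)-c$ lies in $[-c,\,A/\epsilon-c]$ for $K\geq 0$, so $U$ is uniformly bounded and $\Phi(K,p)\leq-\delta K+\lambda\|U\|_\infty\to-\infty$ as $K\to\infty$. By the intermediate value theorem there is then a unique $K_*(p)\in[0,\infty)$ with $\Phi(K_*(p),p)=0$, and strict monotonicity gives $\Phi(K,p)\geq 0$ for $K\leq K_*(p)$ and $\Phi(K,p)\leq 0$ for $K\geq K_*(p)$; translating back through $dK_t=\Phi(K_t,P_t)\,dt$ yields the two implications.

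It remains to prove that $K_*$ is smooth, for which I would invoke the implicit function theorem on $\Phi(K_*(p),p)=0$. The partial derivative $\partial_K\Phi(K,p)=-\delta+\lambda\,\partial_K U(K,p)\leq-\delta<0$ is bounded away from $0$, again by the decreasing property of $U$, so the implicit function theorem applies as soon as $U$, hence $\Phi$, is $C^1$ in $(K,p)$, and it yields $K_*\in C^1$ with $K_*'(p)=\lambda\,\partial_p U(K_*(p),p)\big/\big(\delta-\lambda\,\partial_K U(K_*(p),p)\big)$; bootstrapping with the regularity of $U$ upgrades this to the claimed smoothness. The one genuinely non–routine ingredient is exactly the regularity of $U$: the master equation (\ref{meqp}) is degenerate, carrying no second–order term in $K$, so standard uniformly elliptic estimates are not directly available and one must instead exploit that the equation is uniformly parabolic in the $p$ direction ($\nu>0$ multiplies $\partial_{pp}U$) together with the transport structure in $K$ to propagate smoothness; this is classical but is the step I would expect to require the most care. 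Everything else follows from the monotone structure already used in Theorem \ref{thmp}.
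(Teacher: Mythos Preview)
Your proposal is correct and follows essentially the same approach as the paper: fix $p$, use that $K\mapsto U(K,p)$ is decreasing with $U(0,p)\geq 0$ to get a unique root of $\lambda U(K,p)=\delta K$, and then read off smoothness of $K_*$ from the regularity of $U$. You supply considerably more detail than the paper does---the explicit boundedness argument at $K\to\infty$, the implicit function theorem with $\partial_K\Phi\leq-\delta<0$, and the honest acknowledgment that smoothness of $U$ in the degenerate $K$-direction is the one step requiring care---none of which the paper spells out.
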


\begin{proof}
From the properties of  $U$ given by theorem \ref{thmp}, for every value of $p$, $K \to U(K,p)$ is decreasing and $U(0,p) \geq 0$. Thus for every $p$, the equation $\lambda U(K,p) = \delta K$ has a unique solution which we note $K_*(p)$, from the regularity of $U$, we obtain that $K_*$ is a smooth function.
\end{proof}

When the exchange rate between the reward and the cost of mining is random, there is a function that acts like a stationary state for each $p$. In other words, for every value of $p$, there is a threshold $K_*(P_t)$ which characterizes the fact that the real hashrate is increasing or decreasing. The randomness of the reward prevents the real hashrate to stay at a constant level. There are two forces that apply to the hashrate $K$. First, randomness in the reward acts as a series of exogenous random shocks, this can make mining being more or less profitable (depending on the current value of the hashrate $K$), which will push the hashrate either toward or away from $K_*(P_t)$. Second, miners will react to these random shocks by adjusting their computational power toward $K_*(P_t)$. In other words, the function $K_*(P_t)$ acts like an attractor for the hashrate, but the randomness in the reward keeps the equilibrium hashrate moving around.

\subsection{Two populations of miners in a deterministic case}

We present here an extension in the deterministic setting in which there are two types of miners. The case with more than two types of miners can be treated in the same way. We label all the constant of the previous model with an index $i \in \{1; 2\}$ to indicate to which type of miners they refer. The two types of miners have access to the same mining equipment technology, i.e. $\delta_1=\delta_2=\delta$.

This is meant to represent two distinct populations that have access to different prices for electricity, $c_1\neq c_2$. MFG models with several populations is by now a well documented subject and we refer to \citeasnoun{achdou2017mean} and \citeasnoun{cirant2015multi} for more details on this topic. Let us indicate that in such a context, the right approach consists in considering the value function for the two populations as well as the real hashrate produced by the two types of miners (which are the mean field terms here). In this model, we introduce reservation utilities $\alpha_1,\alpha_2 \geq 0$ for the two types of miners. Meaning that they have the possibility to turn off their machines. To account for that we slightly change the notation. We label $K$ the potential real hashrate by the miners of the first type if they are all mining, and $L$ the potential real hashrate by the miners of the second type if they are all mining. In other words, this represents the stock of hashing machines for each population. And we denote by $\phi(K,L)$ and $\psi(K,L)$ the real hashrate actually produced by the two types of miners. The value functions of the two types of miners are denoted by $U$ and $V$ and satisfy
\begin{equation}
\label{defU2}
U(K,L) := \mathbb{E}\left[ \int_0^{\infty} e^{-(r_1+\delta)t} \max \left(  \frac{1}{\phi(K_t,L_t)+\psi(K_t,L_t)+\epsilon} - c_1; \alpha_1 \right)  \right],
\end{equation}

\begin{equation}
\label{defV2}
V(K,L) := \mathbb{E}\left[ \int_0^{\infty} e^{-(r_2+\delta)t} \max\left(  \frac{1}{\phi(K_t,L_t)+\psi(K_t,L_t)+\epsilon} - c_2; \alpha_2 \right)  \right],
\end{equation}

where $(K_t)_{t \geq 0}$ and $(L_t)_{t\geq0}$ evolve according to 

\begin{equation}\label{edo2pop}
\begin{cases}
dK_t = - \delta K_t dt + \lambda_1 U(K_t,L_t)dt,\\
dL_t = - \delta L_t dt + \lambda_2 V(K_t,L_t)dt, \\
K_0 = K , L_0=L.
\end{cases}
\end{equation}

Because we introduced reservation utilities, the miners can decide not to participate in the mining game and get the utility $\alpha_i$. This ensures the positivity of $U$ and $V$ and thus the stability of the domain $\{K \geq 0\}\cap \{L \geq 0\}$ under the trajectories given by (\ref{edo2pop}). For the sake of simplicity, in the following, we assume $\alpha_1 = \alpha_2 = 0$. In such a context, $\phi$ and $\psi$ satisfy

\begin{equation}\label{defphi}
\phi(K,L) = \begin{cases} K &\text{ if }  \frac{1}{K+\psi(K,L)+\epsilon} \geq c_1, \\ (c_1)^{-1} - \psi(K,L)-\epsilon &\text{ else},\end{cases}
\end{equation}
\begin{equation}\label{defpsi}
\psi(K,L) = \begin{cases} L &\text{ if }  \frac{1}{\phi(K,L)+L+\epsilon} \geq c_2, \\ (c_2)^{-1} - \phi(K,L)-\epsilon &\text{ else}.\end{cases}
\end{equation}

Let us note that the two previous equations do not properly define $\phi$ and $\psi$ in the whole domain $(\mathbb{R}_+)^2$, but only in the region of interest to define the maxima in (\ref{defU2})-(\ref{defV2}). Basically miners of type $i$ are all taking part in the mining activity if they can all gain from this, and if this is not the case, only a certain proportion of them is taking part in the mining industry. Obviously, when we are in the latter case, the proportion of miners of type $i$ which take part in the mining industry is such that all the miners of type $i$ are indifferent between mining and taking the reservation utility. Such a phenomenon is a Nash equilibria in mixed strategies in the game, we refer to \citeasnoun{bertucci2018optimal} and \citeasnoun{bertucci2018fokker} for more details on mixed Nash equilibria in MFG.  
The same approach for how to derive (\ref{masterEq}) can be used to derive the following system of 2 master equations.

\begin{equation}
\begin{dcases}
\begin{array}{@{}l}
\displaystyle
0=-(r_1+\delta)U+(-\delta K+\lambda_1 U)\partial_{K}U+(-\delta L+\lambda_2 V)\partial_{L}U \hspace{1cm}\\
\displaystyle
\hfill
+\max \left\{ \frac{1}{ \phi + \psi+ \epsilon}  -c_1; 0\right\}  \text{ in } [0,\infty)^2,\\
\end{array}
\\
\begin{array}{@{}l}
0=-(r_2+\delta)V+(-\delta K+\lambda_1 U)\partial_{K}V +(-\delta L+\lambda_2 V)\partial_{L}V \hspace{1cm}\\
\displaystyle
\hfill
+\max \left\{ \frac{1}{ \phi + \psi+ \epsilon}  -c_2; 0 \right\}  \text{ in } [0,\infty)^2,
\end{array}
\end{dcases}
\label{system}
\end{equation}

For the system (\ref{system}), we can obtain, as in the previous setting, the following result.

\begin{theorem}
There exists a unique lipschitz couple $(U,V)$ solution of (\ref{system}) such that $U(0,\cdot) \geq 0$ and $V(\cdot,0) \geq 0$. Moreover this couple satisfies the monotonicity property :
\begin{equation}
\begin{array}{l}
(U(x_1,y_1) - U(x_2,y_2))(x_1 - x_2) + (V(x_1,y_1) - V(x_2,y_2))(y_1 - y_2) \leq 0, \\
\forall x_1,x_2,y_1,y_2 \geq 0;
\end{array}
\end{equation}
together with the fact that both $U$ and $V$ are decreasing with respect to both $x$ and $y$.
\label{theorem2}
\end{theorem}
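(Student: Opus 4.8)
The plan is to transport the monotone-structure argument of Theorems \ref{thm1} and \ref{thmp} to the coupled system (\ref{system}). Two preliminary remarks simplify matters. Since $\alpha_1=\alpha_2=0$, the running rewards in (\ref{defU2})--(\ref{defV2}) are of the form $\max\{\,\cdot\,,0\}$, so any solution satisfies $U\ge 0$ and $V\ge 0$ on all of $[0,\infty)^2$; in particular the sign conditions $U(0,\cdot)\ge 0$ and $V(\cdot,0)\ge 0$ hold automatically, and in (\ref{edo2pop}) the drift of $K$ on $\{K=0\}$ equals $\lambda_1 U\ge 0$ while that of $L$ on $\{L=0\}$ equals $\lambda_2 V\ge 0$, so the quadrant $[0,\infty)^2$ is invariant and no boundary data is needed. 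Moreover $K$ and $L$ enter (\ref{system}) only through the aggregate active hashrate $h(K,L):=\phi(K,L)+\psi(K,L)$, which on the region of interest is a well-defined $1$-Lipschitz function, nondecreasing in each variable, and the reward of type $i$ is $f_i(h(K,L))$ with $f_i(h)=\max\{(h+\epsilon)^{-1}-c_i,0\}$ nonincreasing.

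I would first obtain uniqueness together with the joint monotonicity inequality. Given two Lipschitz solutions $(U_1,V_1)$ and $(U_2,V_2)$ with the prescribed behaviour, introduce $W:[0,\infty)^4\to\mathbb{R}$,
\[
W(x_1,y_1,x_2,y_2)=(U_1(x_1,y_1)-U_2(x_2,y_2))(x_1-x_2)+(V_1(x_1,y_1)-V_2(x_2,y_2))(y_1-y_2),
\]
and derive the first-order PDE it solves by inserting (\ref{system}). Exactly as in the one-population computation, the terms coming from the differences of the drifts $-\delta x+\lambda_1 U$ and $-\delta y+\lambda_2 V$ reorganize into terms to which a maximum-principle argument applies, the decisive one being the nonpositive contribution $-\lambda_1(U_1-U_2)^2-\lambda_2(V_1-V_2)^2$; so $W\le 0$ on $[0,\infty)^4$ will follow as soon as the coupling contributes a nonpositive zeroth-order term, namely
\[
(f_1(h(x_1,y_1))-f_1(h(x_2,y_2)))(x_1-x_2)+(f_2(h(x_1,y_1))-f_2(h(x_2,y_2)))(y_1-y_2)\le 0.
\]
Once $W\le 0$ is known, taking $(U_1,V_1)=(U_2,V_2)$ is precisely the announced monotonicity property; specializing to $x_1=x_2$ (resp.\ $y_1=y_2$) gives ``$V$ decreasing in $y$'' (resp.\ ``$U$ decreasing in $x$''); and for two distinct solutions, $W\le 0$ together with $W=0$ and a first-order expansion along the diagonal $\{x_1=x_2,\ y_1=y_2\}$ forces $U_1\equiv U_2$ and $V_1\equiv V_2$.

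The step I expect to be the main obstacle is the second displayed inequality, which is the Lasry--Lions monotonicity of the two-type reward. It is \emph{not} immediate from ``$h$ nondecreasing'' alone, because $h$ is not simply $K+L$: once $h$ saturates at a level $c_i^{-1}-\epsilon$ the population with cheaper electricity crowds the other one out, as one reads off (\ref{defphi})--(\ref{defpsi}). What makes the inequality hold is that wherever both populations are active the two rewards differ by a constant, $f_1-f_2\equiv c_2-c_1$, so on that region $f_i\circ h$ behaves exactly like the scalar reward $(K+L+\epsilon)^{-1}$ of a one-population model with state variable $K+L$ --- for which the inequality is the one already used in Theorem \ref{thm1} --- whereas on each of the remaining regions one of the two rewards is locally constant and contributes nothing. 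I would prove it by a short case analysis over the regions of (\ref{defphi})--(\ref{defpsi}).

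It remains to establish the uniform Lipschitz bound in $K$, which along the way also yields the two monotonicities not produced by $W$, namely ``$U$ decreasing in $y$'' and ``$V$ decreasing in $x$''. Following the proof of Theorem \ref{thm1}, I would regularize the kinks --- mollifying $\max\{\,\cdot\,,0\}$ and replacing $\phi,\psi$ by smooth nondecreasing maps --- solve the regularized system classically, differentiate it in $K$ and in $L$ to obtain a linear system for $\partial_K U,\partial_L U,\partial_K V,\partial_L V$, and test against quadratic functionals of the gradient of the form $\partial U+\beta(\partial U)^2$ with $\beta>0$ small of order $\epsilon^2\delta$, so that a maximum principle yields that each derivative is nonpositive and bounded below by $-\beta^{-1}$, uniformly in the regularization parameter. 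The cross terms such as $\lambda_2\,\partial_K V\,\partial_L U$ that appear in the differentiated system are absorbed by treating the four derivatives simultaneously rather than one at a time. Existence then follows, as in the scalar case, from these a priori bounds together with the invariance of $[0,\infty)^2$, by a standard approximation argument that I would not detail.
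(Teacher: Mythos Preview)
Your outline follows the paper's own argument closely: the same auxiliary function $W(x_1,y_1,x_2,y_2)$ for uniqueness and joint monotonicity, and a gradient-based maximum-principle a priori estimate for existence/Lipschitz regularity. Your explicit identification and case analysis of the Lasry--Lions monotonicity of the coupled reward $(f_1\circ h,f_2\circ h)$ over the regions of (\ref{defphi})--(\ref{defpsi}) is in fact more detailed than what the paper sketches.

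Two differences are worth noting. For the Lipschitz estimate, the paper does not work with four separate scalar functionals of the form $\partial U+\beta(\partial U)^2$ but with a single quadratic form in auxiliary directions $(\xi_1,\xi_2)$,
\[
W(x,y,\xi_1,\xi_2)=\partial_K U\,\xi_1^2+\partial_L V\,\xi_2^2+(\partial_K V+\partial_L U)\,\xi_1\xi_2+\beta\bigl[(\partial_K U\,\xi_1)^2+(\partial_L V\,\xi_2)^2\bigr],
\]
which packages the full Jacobian and absorbs the cross terms you mention in one stroke; this is the standard device for monotone systems and is cleaner than coupling four scalar inequalities by hand. Second, for the cross-monotonicities ($U$ decreasing in $y$, $V$ decreasing in $x$) the paper does not extract them from the Lipschitz step but obtains them afterwards, by a direct maximum principle on the linear PDEs satisfied by $\partial_y U$ and $\partial_x V$, once the joint monotonicity (hence $\partial_x U\le 0$, $\partial_y V\le 0$) is already in hand. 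Your route would also work in principle, but the paper's decoupling of the two steps is shorter.
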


\begin{remark}
As in the one dimensional case, let us mention that no boundary condition is needed in this situation.
\end{remark}

\begin{proof}
The proof of this statement follows exactly the same argument as in the proof of theorem \ref{theorem2}. The monotonicity and uniqueness can be proven by looking at the equation satisfied by an auxiliary function $W$, which is in this case defined by
\[
W(x_1,y_1,x_2,y_2) = (U_1(x_1,y_1) - U_2(x_2,y_2))(x_1 - x_2) + (V_1(x_1,y_1) - V_2(x_2,y_2))(y_1 - y_2),
\]
for $(U_1,V_1)$ and $(U_2,V_2)$ two solutions of (\ref{system}). 
An a priori estimate which yields regularity and existence of solutions can be obtained by looking at the PDE satisfied by $W$ defined by
\[
\begin{array}{l}
W(x,y,\xi_1,\xi_2) = \partial_{K_1}U(x,y)\xi_1^2 + \partial_{K_2}V(x,y)\xi_2^2 + (\partial_{K_1}V + \partial_{K_2}U)\xi_1\xi_2 \hspace{1cm}\\
\hfill
+ \beta[(\partial_{K_1}U\xi_1)^2 + (\partial_{K_2}V\xi_2)^2],
\end{array}
\]
for some parameter $\beta$. Once the monotonicity property has been established, the fact that $U$ is decreasing with respect to $y$ (resp. $V$ is decreasing with respect to $x$) follows directly from the application of a maximum principle on the PDE satisfied by $\partial_y U$ (resp. $\partial_x V$).
\end{proof}

As in the simplest case we already treated, there exists a stationary state, and the induced trajectories converge toward this stationary state.

\begin{proposition}
There exists a unique stationary state $(x_0,y_0)$ and all the induced trajectories converge toward it.
\end{proposition}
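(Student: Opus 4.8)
The plan is to mimic the one-population argument of Proposition \ref{statState}, but now the fixed-point condition for a stationary state lives in $\mathbb{R}^2$, so one cannot solve a simple quadratic. A stationary state $(x_0,y_0)$ is characterized by the pair of equations $\delta x_0 = \lambda_1 U(x_0,y_0)$ and $\delta y_0 = \lambda_2 V(x_0,y_0)$. First I would recast this as a fixed point of the map $T:(K,L)\mapsto\left(\tfrac{\lambda_1}{\delta}U(K,L),\tfrac{\lambda_2}{\delta}V(K,L)\right)$ on the invariant domain $[0,\infty)^2$. Using Theorem \ref{theorem2}: $U,V$ are bounded (the reward term is at most $1/\epsilon$ and $U,V\geq 0$ with the max against $0$, so $0\le U\le \tfrac{1}{(r_1+\delta)\epsilon}$ and similarly for $V$), hence $T$ maps a large enough compact box $[0,M_1]\times[0,M_2]$ into itself; since $U,V$ are Lipschitz, $T$ is continuous, and Brouwer gives existence of a fixed point. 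This handles existence without an explicit formula.

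For uniqueness I would exploit the monotonicity property stated in Theorem \ref{theorem2}: for any two points $(x_1,y_1),(x_2,y_2)$,
\begin{equation}
(U(x_1,y_1)-U(x_2,y_2))(x_1-x_2)+(V(x_1,y_1)-V(x_2,y_2))(y_1-y_2)\leq 0.
\end{equation}
If $(x_1,y_1)$ and $(x_2,y_2)$ are two stationary states, substitute $\lambda_1 U(x_i,y_i)=\delta x_i$ and $\lambda_2 V(x_i,y_i)=\delta y_i$ into the left-hand side. The $U$-difference becomes $\tfrac{\delta}{\lambda_1}(x_1-x_2)$ and the $V$-difference becomes $\tfrac{\delta}{\lambda_2}(y_1-y_2)$, so the inequality reads $\tfrac{\delta}{\lambda_1}(x_1-x_2)^2+\tfrac{\delta}{\lambda_2}(y_1-y_2)^2\leq 0$, forcing $x_1=x_2$ and $y_1=y_2$. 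This is the clean structural argument and it is exactly the two-population analogue of the "decreasing $U$" trick used before.

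For convergence of the trajectories given by (\ref{edo2pop}) toward $(x_0,y_0)$, I would use a Lyapunov function built from the same monotone structure, e.g. $\Phi(K,L) = \tfrac{1}{2\lambda_1}(K-x_0)^2 + \tfrac{1}{2\lambda_2}(L-y_0)^2$. Differentiating along the flow,
\begin{equation}
\dot\Phi = \tfrac{1}{\lambda_1}(K-x_0)(-\delta K+\lambda_1 U(K,L)) + \tfrac{1}{\lambda_2}(L-y_0)(-\delta L+\lambda_2 V(K,L)).
\end{equation}
Writing $-\delta K+\lambda_1 U(K,L) = \lambda_1\big(U(K,L)-U(x_0,y_0)\big) - \delta(K-x_0)$ (and symmetrically), one gets $\dot\Phi = -\delta\Phi\cdot(\text{positive constant factor}) + \big[(U(K,L)-U(x_0,y_0))(K-x_0)+(V(K,L)-V(x_0,y_0))(L-y_0)\big]$, and the bracketed term is $\le 0$ by the monotonicity property. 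Hence $\dot\Phi \leq -2\delta\,\tfrac{?}{?}$-type bound — more carefully, $\dot\Phi\le -c\,\Phi$ for some $c>0$ after absorbing constants — giving exponential convergence to $(x_0,y_0)$ for every initial condition. The main obstacle I anticipate is not any of these steps individually but the bookkeeping in the convergence argument: one must confirm that the trajectory stays in the region where the piecewise definitions (\ref{defphi})--(\ref{defpsi}) of $\phi,\psi$ are the relevant ones and that the monotonicity inequality of Theorem \ref{theorem2} — which was proved for the value functions on the full domain — indeed applies at the running point $(K_t,L_t)$; handling the boundary case where the stationary state sits on $\{K=0\}$ or $\{L=0\}$ (alluded to in the text, when electricity-price differences are large) requires checking that $\dot\Phi\le 0$ still holds with the one-sided dynamics there.
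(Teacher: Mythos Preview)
Your proposal is correct and follows essentially the same route as the paper. The paper packages everything through the vector field $W(z)=(-\delta x+\lambda_1 U,\,-\delta y+\lambda_2 V)$ and its strong monotonicity $\langle W(z_1)-W(z_2),z_1-z_2\rangle\le -\delta|z_1-z_2|^2$, from which existence, uniqueness and convergence (via $\tfrac{d}{dt}|z(t)-z_0|^2$) are all read off at once; you instead separate existence (Brouwer on $T=(\lambda_1 U/\delta,\lambda_2 V/\delta)$) from uniqueness (direct substitution into the monotonicity inequality) and use the weighted Lyapunov $\Phi=\tfrac{1}{2\lambda_1}(K-x_0)^2+\tfrac{1}{2\lambda_2}(L-y_0)^2$. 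Your weighting by $1/\lambda_i$ is in fact the more careful choice when $\lambda_1\neq\lambda_2$, since Theorem~\ref{theorem2} gives monotonicity of $(U,V)$, not of $(\lambda_1 U,\lambda_2 V)$; the paper's unweighted inequality implicitly assumes either $\lambda_1=\lambda_2$ or a rescaled inner product. Your concern about the boundary case is legitimate but harmless: the domain is invariant under the flow and the Lyapunov computation $\dot\Phi\le -2\delta\Phi$ goes through unchanged there.
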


\begin{proof}
Let us introduce some notations. We use the notation $(x,y) = z \in \mathbb{R}^2$ and we denote by $\langle \cdot, \cdot \rangle$ the scalar product on $\mathbb{R}^2$. We define the function $W : (\mathbb{R}_+)^2 \to \mathbb{R}^2$ by
\[
W(z) = (- \delta x + \lambda_1 U(z), -\delta y + \lambda_2 V(z)).
\]

From the monotonicity of $(U,V)$, we deduce that $W$ satisfies 
\[
\langle W(z_1) - W(z_2), z_1 - z_2 \rangle \leq -\delta |z|^2.
\]

From this property, existence and uniqueness of a stationary state $z_0$ (i.e. $z_0$ such that $W(z_0) = 0$) is standard. Moreover, to realize that a trajectory $(z(t))_{t \geq 0}$ converges toward $z_0$, one only need to compute the derivative in time of $|z(t) - z_0|^2$.
\end{proof}

Let us make 2 remarks. First, at the stationary state all existing machines are necessarily running. From the definition of the real hashrate dynamics (\ref{edo2pop}), if in a region in which some machines are turned off, miners of that type will stop buying new machines and therefore the stock of current machines will depreciate at the rate of technological progress until it reaches the stationary state. Second, the stationary state in the previous result can be located on the boundary of the domain $\{K \geq 0\}\cap \{L \geq 0\}$. For instance in the case $r_1=r_2$, $\lambda_1= \lambda_2$, $\alpha_1 = \alpha_2 = 0$ and $c_1 < c_2$, if 
\begin{equation}
K_* := \frac{\sqrt{(\delta \epsilon - \frac{c_1 \lambda_1}{r_1 + \delta})^2 + 4 \frac{\delta \lambda_1}{r_1 + \delta} } - \delta \epsilon - \frac{c_1\lambda_1}{r_1 + \delta}}{2 \delta}>\frac{1}{c_2} - \epsilon,
\end{equation}
then the only stationary equilibrium is $(K_*,0)$. This can easily be checked by remarking that $K_*$ is the equilibrium in the one population case if only the first miners are active. Thus if $c_2$ is large enough, at this level of total hashrate, miners from the second type have no interest in participating to mining industry. Thus $(K_*,0)$ is indeed a stationary equilibrium, and from the previous result, it is the sole one.

Otherwise stated, if the difference in electricity costs is large enough, only miners that have access to the cheap source of electricity will actively mine the blockchain in equilibrium. But if the difference in electricity price is small, even the miners that have access to the expensive source of electricity will be able to get a share of the real hashrate.

\subsection{Two populations of miners in a stochastic case}

We now present an extension with two different populations of miners, from two different countries, but we introduce some randomness that makes the relative profitability of miners in each population to vary. Once again, the case with more than two populations can be treated the same way. In some sense, it is an extension which put together the two previous cases. Each population faces the same cost structure but denominated in different currencies, one for each country, denoted by $Cur_i$ with $i\in\{1,2\}$. To make the analysis clearer we only consider variations in the exchange rate between the two currencies, denoted by $h(P_t)$ and assume no variation in the exchange rate between the reward of the blockchain and the cost. Without loss of generalities, we assume the reference currency is $Cur_1$. This means that the exchange rate between the cryptocurrency (the reward currency) and $Cur_1$ does not change and the exchange rate between the cryptocurrency and $Cur_2$ does change only through changes in $h(P_t)$.

The key idea for this extension to not be trivial is that the reward expressed in $Cur_2$ changes due to variations in the exchange rate because we assume the forex market (including the cryptocurrency) can be easily arbitraged, which is not the case for the electricity market\footnote{In practice there can be situations in which frictions on movement of fund from or to a specific country prevent full arbitrage. However we abstract from such frictions.}. Indeed, recall that the main cost a miner faces while computing hashes is electricity, and electricity prices are generally set locally. Since electricity is not storable and transfer of electricity between one corner of the world to an other can be very expensive (not to say impossible), the price of electricity will not adjust to variations in the exchange rate between the two national currencies\footnote{Very large players can have access to the wholesale market of electricity in which some adjustments may occur, but in general electricity prices will be driven by local supply and local demand.}. Note that we do not consider investment in energy production technologies and miners take the price of electricity as given.

The value function and the master equation will be written in the local currency for each country. In particular, this means that $c_1$ and $c_2$ are denominated in $Cur_1$ and $Cur_2$ respectively, and they are constant in each currency. Similarly to section \ref{randomPrice}, we assume that the exchange rate between the two national currencies is given by $h(P_t)$ where the dynamics of $(P_t)_{t\geq0}$ is given by (\ref{dynP}) and $h$ is a given positive smooth function. We keep up with the notations of the previous section.

In order to keep the same relative friction on the mining hardware market, the $\lambda$ for country $i=2$ also needs to be adjusted by $h(P_t)$, and we denoted it by $\lambda_2(h(P_t))$, while $\lambda_1$ denote the constant level of friction on the mining hardware market for country $i=1$.

Let's denote by $U$ and $V$ the value function for one unit of real hashrate in country $i=1$ and $i=2$ respectively. The analogous of (\ref{defU}) and (\ref{edo}) then becomes

\begin{equation}
U(K,L,p) := \mathbb{E}\left[ \int_0^{\infty} e^{-(r+\delta)t} \max\left(  \frac{1}{\phi(K_t,L_t,P_t)+\psi(K_t,L_t,P_t)+\epsilon} - c_1 ; 0 \right)  \right],
\end{equation}
\begin{equation}
V(K,L,p) := \mathbb{E}\left[ \int_0^{\infty} e^{-(r+\delta)t} \max\left(  \frac{h(P_t)}{\phi(K_t,L_t,P_t)+\psi(K_t,L_t,P_t)+\epsilon} - c_2 ; 0 \right)  \right],
\end{equation}
where $\phi$ and $\psi$ are defined in the analogue way as in the previous section and where $(K_t)_{t \geq 0}$, $(L_t)_{t\geq0}$ and $(P_t)_{t \geq 0}$ evolves according to 

\begin{equation}\label{edo2}
\begin{cases}
dP_t = \alpha(P_t)dt + \sqrt{2\nu}dW_t,\\
dK_t = - \delta K_t dt + \lambda_1 U(K_t,L_t,P_t)dt, \\
dL_t = - \delta L_t dt + \lambda_2(h(P_t)) V(K_t,L_t,P_t)dt,\\
K_0 = K , L_0=L , P_0 = p.
\end{cases}
\end{equation}
Using the same approach for how to derive (\ref{masterEq}) we can derive the following system of 2 master equations.

\begin{equation}
\begin{dcases}
\begin{array}{@{}l}
\displaystyle
0=-(r+\delta)U+(-\delta K+\lambda_1 U)\partial_{K}U+(-\delta L+\lambda_2(h(p)) V)\partial_{L}U\hspace{2cm}\\
\displaystyle
\hfill+\alpha\partial_pU+\nu\partial_{pp}U+\max\left\{\frac{1}{\phi + \psi + \epsilon}  -c_1 ; 0\right\}  \hspace{0.5cm}\text{ in } [0,\infty)^2\times\mathbb{R},\\
\end{array}
\\
\begin{array}{@{}l}
\displaystyle
0=-(r+\delta)V+(-\delta K+\lambda_1 U)\partial_{K}V +(-\delta L+\lambda_2(h(p)) V)\partial_{L}V\hspace{2cm}\\
\displaystyle
\hfill+\alpha\partial_pV+\nu\partial_{pp}V+\max\left\{\frac{h(p)}{\phi + \psi +\epsilon} -c_2 ;  0\right\} \hspace{0.5cm}\text{ in } [0,\infty)^2\times\mathbb{R},\\
\end{array}
\end{dcases}
\label{system}
\end{equation}

We leave results and proofs concerning this model to the interested reader as they can easily be established by combining the results in the previous models. In particular, due to the monotone structure, the mining game is still well posed in this context. Also, as in section \ref{randomPrice} the notion of stationary state does not really make sense, but for each value of the exchange rate $h(P_t)$, there exists a $Z_*(P_t)=(K_*(P_t),L_*(P_t))$ that will act like an attractor for the real hashrate in the two populations. And once again, because $U$ and $V$ are smooth, the function that maps $P_t$ into $Z_*(P_t)$ is also smooth.

\subsection{A different type of interaction on the product market}
\label{2p}

In this section we indicate how we can model different types of interactions for the market in which the machines are bought and sold. In our first model we consider a simple case in which this market is captured in the relation 

\begin{equation}
dK_t = - \delta K_t dt + \lambda U(K_t)dt.
\end{equation}

Let us mention that in general we could replace the term $\lambda U$ by a function $F(K,U)$. General assumptions for which results of well-posedness can be established on the master equation in this situation can be found in \citeasnoun{lions2007cours} \citeasnoun{bertucci2019some}. Here we want to focus on a particular model presented in \citeasnoun{prat2018equilibrium}. In their model, the machines cannot be sold and there is free entry in the market. This means that if the value of one unit of real hashrate is strictly positive, then machines are bought until this value becomes $0$. This behavior prevents the value $U$ of being strictly positive and thus imposes an obstacle on $U$. Following \citeasnoun{bertucci2018fokker}, \citeasnoun{bertucci2018optimal} and \citeasnoun{bertucci2020forthcomming}, it is natural to consider the following penalization for the trajectory

\begin{equation}
\label{uPosPart}
dK_t = - \delta K_t dt + \frac{1}{\eta} (U(K_t))_+dt,
\end{equation}
and to consider the limit $\eta \to 0$. Here $(\cdot)_+$ denotes the positive part. As $\eta$ goes to $0$, there is less and less friction in the hardware market and in the limit miners can freely buy as many machines as they want. Because of the positive part, the second term in (\ref{uPosPart}) cannot be negative and the real hashrate cannot decrease by more than the rate of technological progress. Note that this assumption is meant to represent how the mining market actually behaves for large blockchain. Indeed, on the largest blockchains, the hashing algorithm is so specifically designed into the hardware that mining machines can have a zero resale value.

Let us mention that a direct adaptation of the proof of theorem \ref{thm1} implies that for all $\eta> 0$, there is a unique lipschitz solution $U_{\eta}$ of

\begin{equation}\label{penal}
0=-(r+\delta)U+(-\delta K+\frac{1}{\eta} (U)_+)U'_K+\frac{1}{K+ \epsilon}-c \text{ in } [0,\infty).
\end{equation}

Moreover, checking the proof of theorem \ref{thm1}, the sequence $(U_{\eta})_{\eta > 0}$ is a sequence of uniformly lipschitz function. Hence, we can pass to the limit in (\ref{penal}) and obtain existence of a solution of 

\begin{equation}\label{obst}
\displaystyle
\max\left((r+\delta)U-(-\delta K)U'_K-\frac{1}{K+ \epsilon}+c, U \right) = 0\text{ in } [0, \infty),\\
\displaystyle
\end{equation}

\begin{theorem}
There exists a unique lipschitz and decreasing function $U$ solution of (\ref{obst}).
\end{theorem}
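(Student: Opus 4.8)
The plan is to obtain the solution of the obstacle problem (\ref{obst}) as the monotone limit of the penalized solutions $U_\eta$ of (\ref{penal}), exploiting the uniform Lipschitz bound that the proof of Theorem \ref{thm1} provides. First I would record that the adaptation of Theorem \ref{thm1} gives, for each $\eta > 0$, a unique Lipschitz decreasing solution $U_\eta$ of (\ref{penal}) with $U_\eta(0) \geq 0$ and $\lim_{K\to\infty} U_\eta(K) \leq 0$; the key point, visible in the a priori estimate of that proof, is that the Lipschitz constant depends only on $\beta^{-1}$ with $\beta \leq \epsilon^2\delta/2$, hence is \emph{independent} of $\eta$. By Arzel\`a--Ascoli (and a diagonal argument to handle the non-compact domain $[0,\infty)$) a subsequence $U_\eta$ converges locally uniformly to some Lipschitz decreasing $U$. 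One then passes to the limit in (\ref{penal}) in the viscosity sense: the term $\tfrac1\eta (U_\eta)_+$ forces $(U_\eta)_+ \to 0$ wherever it stays bounded, so in the limit $U \leq 0$ on $[0,\infty)$, while on the open set $\{U < 0\}$ the penalization term vanishes in the limit and $U$ solves $(r+\delta)U - (-\delta K)U'_K - \tfrac{1}{K+\epsilon} + c = 0$; combining these gives that $U$ is a (viscosity) solution of (\ref{obst}). I would also check that $U(0) \geq 0$ survives in the limit, so together with $U \leq 0$ one has $U(0) = 0$, and that $U$ is decreasing as a locally uniform limit of decreasing functions.

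For uniqueness I would argue by a comparison/maximum-principle argument adapted to the obstacle structure, in the same spirit as the $W(x,y) = (U(x)-V(y))(x-y)$ device used in Theorem \ref{thm1}. Given two solutions $U, V$, I would show $W(x,y) := (U(x)-V(y))(x-y) \leq 0$: at an interior positive maximum of $W$ one uses the equations satisfied by $U$ at $x$ and $V$ at $y$, and the obstacle constraint $U,V \leq 0$ together with the monotonicity of $K\mapsto 1/(K+\epsilon)$ to derive a contradiction, exactly as before but now keeping track of which of the two terms in the max is active (the cases where the obstacle is active for one or both functions are handled by noting that there $U$ or $V$ equals $0$, which pins down the sign). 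This yields $U = V$ and, taking $U = V$ with $x \neq y$, re-establishes monotonicity, giving uniqueness within the class of Lipschitz decreasing solutions; uniqueness then also shows the whole family $(U_\eta)$ converges, not just a subsequence.

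The main obstacle I expect is the rigorous passage to the limit and the correct interpretation of (\ref{obst}): one must be careful that the limit is a genuine viscosity solution of the obstacle problem rather than merely satisfying the two pieces separately on open sets, and in particular that no spurious behavior appears at the free boundary $\partial\{U<0\}$ or at $K=0$. This is where the uniform Lipschitz estimate is doing the real work — it both gives compactness and controls $U'_K$ so that the first-order term $(-\delta K)U'_K$ passes to the limit — and the monotone (decreasing) structure is what makes the free boundary a single point, keeping the comparison argument clean. The remaining verifications (invariance of $[0,\infty)$, the boundary behavior at infinity) are routine and parallel to the earlier proofs.
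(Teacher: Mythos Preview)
Your proposal is correct and follows essentially the same route as the paper: existence via the uniform Lipschitz bound inherited from the proof of Theorem~\ref{thm1} and Arzel\`a--Ascoli compactness of the penalized solutions $U_\eta$, uniqueness via the auxiliary function $W(x,y)=(U(x)-V(y))(x-y)$ with a case analysis according to which branch of the $\max$ is active. The one extra ingredient the paper supplies is to evaluate (\ref{penal}) at $K=0$ to obtain $U_\eta(0)\to 0$ directly, which simultaneously furnishes the uniform $L^\infty$ bound needed for Arzel\`a--Ascoli and the obstacle constraint $U\le 0$ in the limit --- this is cleaner than your claim that boundedness of the penalty term forces $(U_\eta)_+\to 0$, since the penalization in the PDE is $\tfrac{1}{\eta}(U_\eta)_+\,(U_\eta)'_K$ and the derivative factor could in principle absorb the blow-up.
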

\begin{proof}
As already stated, the sequence $(U_{\eta})_{\eta > 0}$ is uniformly lipschitz. Evaluating the PDE satisfied by $U_{\eta}$ at $0$, we deduce that
\[
(r + \delta)U_{\eta}(0) = \frac{1}{\epsilon} - c + \frac{1}{\eta}(U_{\eta}(0))_+(U_{\eta})'_K(0).
\]
We then deduce that $(U_{\eta}(0))_{\eta}$ converges to $0$ as $\eta \to 0$. Thus from Ascoli-Arzela theorem, extracting a subsequence if necessary, $(U_{\eta})_{\eta >0}$ converges to some function $U$, which clearly solves (\ref{obst}). Let us now check that (\ref{obst}) has a unique solution. We denote by $U$ and $V$ two solutions of (\ref{obst}). We define
\[
W(x,y) = (U(x)-V(y))(x-y).
\]
We want to show that $W$ is negative. Let us assume that $W$ has a local maximum $(x_0,y_0)$ such that $W(x_0,y_0) > 0$. If $U(x_0) = 0$, then $x_0 > y_0$. Computing $\partial_yW(x_0,y_0)$, we obtain (because $V(y_0) < 0$) :
\[
\begin{aligned}
\partial_y W(x_0,y_0) &= V(y_0) + V'_K(y_0)(y_0 - x_0)\\
& = V(y_0) + (y_0 - x_0)\left( \frac{1}{y_0 + \epsilon} - c - (r + \delta)V(y_0)  \right)(\delta y_0)^{-1}\\
& < 0,
\end{aligned}
\]
where we have used that $1/(x_0 + \epsilon) > c$ and $x_0 < y_0$. This is a contradiction because we should have $\partial_y W (x_0,y_0) = 0$. We deduce that $U(x_0) < 0$. Similarly, we can assume that $V(y_0) < 0$. Thus using the PDE satisfied by $U$ and $V$, we deduce $W(x_0,y_0) \leq 0$. Hence, if $W$ has a local maximum it is negative. We leave to the interested reader the simple proof of the fact that $W(x,y)$ is negative when either $x$ or $y$ is large enough. We then conclude that $W \leq 0$ everywhere and thus we conclude as we did in the first model that there is a unique solution of (\ref{obst}).
\end{proof}

\begin{remark}
By combining this setting with the stochastic one, we can establish, following the same argument, uniqueness for the model of \citeasnoun{prat2018equilibrium}.
\end{remark}

We refer to \citeasnoun{bertucci2020forthcomming} for more results on this type of master equations. Once again, we further characterize the equilibrium by showing the existence of a stationary state for the real hashrate.

Concerning the trajectories associated with the limit model, there are given by $(K_t)_{t\geq 0}$ defined by 

\begin{equation}
\label{trajobst}
\begin{cases}
K_t = K_* \text{ for } t \geq 0 \text{ if } K_t \leq K_*,\\
dK_t = -\delta  K_t \text{ for } t \geq 0 \text{ if } K_t > K_*,
\end{cases}
\end{equation}
where $K_*$ is defined by $K_* := \inf\{K > 0 | U(K) < 0\}$ where $U$ is the unique solution of (\ref{obst}). Hence, as in the simpler model, there exists a unique stationary equilibrium $K_*$ and all the trajectories converge toward it. Moreover, the convergence is instantaneous if the initial condition, $K_0$, is below the equilibrium hashrate. Indeed, because of free entry in the mining industry, if the initial condition is such that $U>0$, that is there is some mining opportunity, miners will enter the game without restriction which will drive the value of mining toward $0$, and the equilibrium real hashrate will instantly reach the stationary state. Miners will keep entering the mining industry only to compensate for the technological progress, and the real hashrate will be constant. For completion, let's also detail the case for which the initial condition is such that $K_0>K_*$. Because in this setting miners cannot sell their mining machines nor turn them off, they will simply stop buying new machines and the real hashrate will decrease at the rate of technological progress until it reaches the stationary state. At this level, miners will start buying new machines again, but just to compensate for the decrease due to the technological progress.

Moreover, to justify the form of the trajectories of (\ref{trajobst}), we can establish a result of convergence of trajectories.

\begin{proposition}
\label{trajLimit}
For any $K_0 \geq 0$, for any $\eta > 0$ denote by $(K^{\eta}_t)_{t\geq 0}$ the solution of 
\[
\begin{cases}
dK^{\eta}_t = - \delta K^{\eta}_t dt + \frac{1}{\eta} (U_{\eta}(K^{\eta}_t))_+dt,\\
K^{\eta}_0 = K_0,
\end{cases}
\]
where $U_{\eta}$ is the unique solution of (\ref{penal}). Denote by $(K_t)_{t \geq 0}$ the solution of (\ref{trajobst}). Then, for any $t> 0$, 
\[
\lim_{\eta \to 0} K^{\eta}_t = K_t.
\]
\end{proposition}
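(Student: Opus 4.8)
The overall strategy is to understand precisely two things --- the zero of the regularized value function and the equilibrium of the associated scalar ODE, both of which converge to $K_*$, and the monotone dynamics of $t\mapsto K^\eta_t$ --- and then to distinguish according to whether $K_0$ lies above or below $K_*$.

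First I would record the structural facts, relying on the proof of the preceding theorem. The family $(U_\eta)_{\eta>0}$ is uniformly Lipschitz and $U_\eta\to U$ uniformly on compact sets (the whole family, since (\ref{obst}) has a unique solution), with $U\equiv 0$ on $[0,K_*]$ and $U<0$ on $(K_*,\infty)$, where $K_*=\tfrac1c-\epsilon$. Evaluating (\ref{penal}) at $0$ rules out $U_\eta(0)\le 0$ (because $\tfrac1\epsilon-c>0$), so $U_\eta(0)>0$; since $U_\eta$ is decreasing and $\{U_\eta=0\}$ has empty interior by (\ref{penal}), $U_\eta$ has a single zero $K^\eta_*$, positive before it and negative after it. Evaluating (\ref{penal}) at $K^\eta_*$ gives $(U_\eta)'_K(K^\eta_*)=\big(\tfrac1{K^\eta_*+\epsilon}-c\big)/(\delta K^\eta_*)\le 0$ by monotonicity, hence $K^\eta_*\ge K_*$, and with the uniform convergence this forces $K^\eta_*\to K_*$. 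Next, $K\mapsto h_\eta(K):=-\delta K+\tfrac1\eta(U_\eta(K))_+$ is strictly decreasing, positive at $0$ and tends to $-\infty$, so $\dot K^\eta=h_\eta(K^\eta)$ has a unique, globally attracting equilibrium $\hat K^\eta$, reached monotonically by every trajectory; one has $0<\hat K^\eta<K^\eta_*$, hence $U_\eta(\hat K^\eta)=\eta\delta\hat K^\eta>0$, and plugging $K=\hat K^\eta$ into (\ref{penal}) (where the coefficient of $(U_\eta)'_K$ vanishes) yields $(r+\delta)\eta\delta\hat K^\eta=\tfrac1{\hat K^\eta+\epsilon}-c$. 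Since $\hat K^\eta$ stays bounded, letting $\eta\to0$ forces $\tfrac1{\hat K^\eta+\epsilon}\to c$, i.e. $\hat K^\eta\to K_*$ from below --- this is exactly the computation in the proof of Proposition \ref{statState}.

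If $K_0>K_*$, then for $\eta$ small $K_0>K^\eta_*$, so $\dot K^\eta=-\delta K^\eta$ and $K^\eta_t=K_0e^{-\delta t}$ until the trajectory hits $K^\eta_*$ at time $t^\eta=\delta^{-1}\log(K_0/K^\eta_*)\to t^\ast:=\delta^{-1}\log(K_0/K_*)$; afterwards $K^\eta$ decreases monotonically towards $\hat K^\eta$ and stays in $[\hat K^\eta,K^\eta_*]$. Hence for a fixed $t<t^\ast$ and $\eta$ small, $K^\eta_t=K_0e^{-\delta t}=K_t$, while for fixed $t\ge t^\ast$ and $\eta$ small, $K^\eta_t\in[\hat K^\eta,K^\eta_*]$, an interval shrinking to $\{K_*\}=\{K_t\}$; this settles this case (the borderline $t=t^\ast$ being covered by either alternative). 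The case $K_0\le K_*$ is the real point. For $\eta$ small $K_0<\hat K^\eta$, so $K^\eta$ increases monotonically to $\hat K^\eta$, whence $\limsup_{\eta\to0}K^\eta_t\le\lim_\eta\hat K^\eta=K_*$ for all $t\ge0$; the matching lower bound amounts to saying that $K^\eta$ rises above $K_*-\rho$ essentially instantaneously. To get this I would use a standard energy identity for (\ref{penal}): multiplying by $(U_\eta)'_K$ on $[0,K^\eta_*)$ and integrating (one integration by parts) gives, for $K\le K^\eta_*$,
\[
U_\eta(K)^2=U_\eta(0)^2+2\eta\int_0^K\!\Big(rU_\eta(s)-\tfrac1{s+\epsilon}+c\Big)\,ds+2\eta\,\delta K\,U_\eta(K).
\]
Taking $K=K^\eta_*$ (where $U_\eta$ vanishes) and letting $\eta\to0$ gives $\eta^{-1}U_\eta(0)^2\to 2\int_0^{K_*}\!\big(\tfrac1{s+\epsilon}-c\big)\,ds>0$; reinjecting this, for every fixed $\rho>0$,
\[
\frac{U_\eta(K)^2}{\eta}\ \longrightarrow\ 2\int_K^{K_*}\!\Big(\tfrac1{s+\epsilon}-c\Big)\,ds\ \ge\ 2\int_{K_*-\rho}^{K_*}\!\Big(\tfrac1{s+\epsilon}-c\Big)\,ds=:2I_\rho>0
\]
uniformly on $[0,K_*-\rho]$, so $U_\eta\ge\sqrt{\eta I_\rho}$ there for $\eta$ small. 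Then $\dot K^\eta\ge-\delta K_*+\sqrt{I_\rho/\eta}\ge\tfrac12\sqrt{I_\rho/\eta}$ as long as $K^\eta_t\in[K_0,K_*-\rho]$, so $K^\eta$ reaches $K_*-\rho$ in a time $O(\sqrt\eta)\to0$. Hence for a fixed $t>0$ and $\eta$ small, $K^\eta_t\in[K_*-\rho,\hat K^\eta]\subset[K_*-\rho,K_*+\rho]$; letting $\rho\to0$ gives $K^\eta_t\to K_*=K_t$. (If $K_0=K_*$ there is nothing to rise: $\hat K^\eta<K_0\le K^\eta_*$ and $K^\eta_t$ stays between $\hat K^\eta$ and $\max(K_*,K^\eta_*)$, both tending to $K_*$.)

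The delicate step is this last one. The monotone a priori bound $\dot K^\eta\ge\delta(\hat K^\eta-K^\eta)$ only produces a rise time of order $1$, which is not enough to pass to the limit at a fixed $t$; one genuinely needs the $\sqrt\eta$ lower bound on $U_\eta$ over compact subsets of $[0,K_*)$, i.e. a quantitative form of the fact that free entry forces an instantaneous jump, and the energy identity above is how I would obtain it. Everything else is a direct use of the monotone structure already exploited in the proofs of Theorem \ref{thm1} and Proposition \ref{statState}.
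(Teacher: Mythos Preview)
Your argument is correct and follows a genuinely different route from the paper's own proof. The paper proceeds abstractly: it extracts, via monotonicity and boundedness, a pointwise limit trajectory $(K'_t)$, and then identifies it with $(K_t)$ by splitting according to whether $U(K'_t)<0$ or $U(K'_t)=0$; in the latter case it uses the integral representation $K^\eta_t-K_0=\int_0^t(-\delta K^\eta_s+\tfrac{1}{\eta}(U_\eta(K^\eta_s))_+)\,ds$ together with the fact that $\tfrac{1}{\eta}(U_\eta(K))_+\to+\infty$ on the interior of $\{U=0\}$ to force $K'_s=K_*$ for almost every $s$. Your approach is instead fully explicit: you locate the zero $K^\eta_*$ of $U_\eta$ and the ODE equilibrium $\hat K^\eta$, show that both squeeze to $K_*$ (the first from above via the sign of $(U_\eta)'_K$ at its zero, the second from below via the stationary identity $(r+\delta)\eta\delta\hat K^\eta=\tfrac{1}{\hat K^\eta+\epsilon}-c$), and then treat the two cases $K_0>K_*$ and $K_0\le K_*$ by elementary ODE monotonicity. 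The key new ingredient you introduce is the energy identity obtained by rewriting (\ref{penal}) as $(-\delta K+\tfrac{1}{\eta}U_\eta)(U_\eta)'_K=(r+\delta)U_\eta-(\tfrac{1}{K+\epsilon}-c)$ and integrating, which gives the sharp scale $U_\eta\gtrsim\sqrt{\eta}$ on compact subsets of $[0,K_*)$ and hence an $O(\sqrt{\eta})$ rise time. This is more than the paper proves: it yields an explicit rate of convergence of the trajectories, whereas the paper's compactness argument gives only qualitative convergence. Conversely, the paper's argument is slightly more robust in that it does not rely on the specific algebraic form of the reward $\tfrac{1}{K+\epsilon}-c$, only on monotonicity and the divergence of the penalization; your energy computation is tied to the one-dimensional scalar structure.
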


\begin{proof}
For any $\eta> 0$, from the decreasing property of $U^{\eta}$, either $(K^{\eta}_t)_{t\geq 0}$ is decreasing or it is increasing and it converges toward a stationary state $K_*^{\eta}$ which is bounded (in $\eta$). Thus as $\eta \to 0$, $(K^{\eta}_t)_{t\geq 0}$ converges simply toward a limit trajectory $(K'_t)_{t\geq 0}$ (not necessary continuous but with the same monotony). For all $t$ such that $U(K'_t)< 0$, because $\{U< 0\}$ is an open set and $(U^{\eta})_{\eta> 0}$ converges uniformly toward $U$, we deduce that
\begin{equation}\label{conve}
\lim_{\eta \to 0} \frac{1}{\eta}U_{\eta}(K^{\eta}_t) = U(K'_t) = 0.
\end{equation}
Let us now consider $t > 0$ such that $U(K'_t) = 0$.
\[
K^{\eta}_t - K_0 = \int_0^{t} (-\delta K^{\eta}_s+ \frac{1}{\eta}(U_{\eta}(K^{\eta}_s))_+)ds.
\]
The monotone convergence theorem implies that
\begin{equation}\label{convmon}
\underset{\eta \to 0}{\text{limsup}} \int_0^{t} \frac{1}{\eta}(U_{\eta}(K^{\eta}_s))_+ds < \infty.
\end{equation}
However, from the convergence of $(U_{\eta})_{\eta> 0}$ toward $U$ established in the previous result, we known that for any $K$ in the interior of $\{U = 0\}$, 
\[
\lim_{\eta \to 0}\frac{1}{\eta}(U_{\eta}(K))_+ = + \infty.
\]
Thus we deduce from (\ref{convmon}) that for almost every $s \in (0,t)$, 
\begin{equation}\label{convh}
\lim_{\eta \to 0}K^{\eta}_t = K_*:=\inf\{K > 0 | U(K) < 0\}.
\end{equation}
From (\ref{conve}) and (\ref{convh}) we easily deduce that $(K'_t)_{t \geq 0} = (K_t)_{t \geq 0}$.
\end{proof}

\begin{remark}
\label{statStatePostPart}
Let us remark that in particular, the stationary state of the penalized problem converges toward the stationary state $K_*$ of the limit problem. By realizing that replacing $\lambda$ by $\eta^{-1}$ in (\ref{equstat}) we obtain the stationary state of the penalized problem, we can compute $K_*$ by passing to the limit $\lambda \to \infty$ in (\ref{equstat}).
\end{remark}

It turns out that the stationary state as defined in (\ref{equstat}) is the same as when the dynamics of the real hashrate is given by (\ref{uPosPart}). In other words, in equilibrium, the real hashrate reaches the same stationary state whether miners can sell their machines or not. This arises because for $\eta>0$, the value of the real hashrate at the stationary state is positive, $U_*>0$, so taking the positive part does not change the equilibrium. In economics terms, because the real hashrate constantly depreciates at the rate of technological progress, miners do not need to be allowed to sell their machines, the real hashrate will still decrease if they do not buy any. The speed of convergence will be affected of course, but the stationary state will eventually reach the same value.

Note that the stationary state in this case correspond to the limit case if we were to continue the graph in figure \ref{comparativeLambda} on the x-axis. In particular we can compute the limit stationary state when there are no frictions, $K_*$. By proposition \ref{trajLimit} we know the penalized state tends to the limit case, as $\lambda\rightarrow+\infty$. So, by taking the limit of (\ref{equstat}), we obtain
\begin{equation}
K_*=\frac{1}{c}-\epsilon.
\end{equation}

\section{Comment on the  modeling and comparaison with standard HJB equation}
\label{masterVsHJB}

We now discuss on the choices of modeling we made. We claim that looking at the master equation for this kind of problem is probably the most general approach one can have. Indeed looking at the master equation allows us a great liberty in our model, while keeping the main structure of the equations (well-posedness, regularity, monotone structure...). We believe the previous series of extensions can convinced the reader of this fact. The main advantage of our point of view is the one of the MFG theory, which is that we can explain the behavior of aggregate quantities (here the hashrate $(K_t)_{t \geq 0}$) from microscopic decisions (here buying or selling machines).\\

To highlight this claim, let us compare our approach with a more traditional one in Economics which would have been to consider the problem of a social planner. The link between the social planner problem and the MFG equilibrium is a well understood question in the MFG literature, see for instance \citeasnoun{lasry2007mean}, \citeasnoun{cardaliaguet2019efficiency}. The case in which the two problems can lead to the same outcome is called, as usual in the game literature, the potential case. In this potential case, the solution of the master equation derives from the gradient of a solution of a Hamilton-Jacobi-Bellman (HJB) equation.\\

Our first model is a potential case and the solution $U$ of the master equation (\ref{masterEq}) is the derivative of the function $\Phi$ solution of the HJB equation 
\begin{equation}\label{hjb}
0=-r\Phi-\delta K\Phi+\frac{\lambda}{2} (\Phi'_K)^2+ln(K+ \epsilon)-cK.
\end{equation}

with state constraint at $K = 0$. The study of (\ref{hjb}) is somehow more classical and easy, and furthermore, the function $\Phi$ has a nice interpretation in the model, it is the value function a monopolist would face if it was the sole owner of the fleet of machines. This HJB equation is obtained by integrating the master equation with respect to $K$.

However, in general, there is no reason why the behavior of a market composed of an infinite number of small agents could be explain by the decision of a single agent. In fact one should consider as the exception and not the rule that we are in the potential case. For instance the more complex case developed in section \ref{2p} is not a potential case. That is why we believe that it is important to develop an approach which is robust to small changes in the model, which is the case of the one we presented. Let us mention, in the setting of the various models we described, some assumptions under which we are not in the potential case.
\begin{itemize}
\item
If $\lambda$ depends on $K$.
\item
In the two dimensional case (i.e. when there is $K$ and $L$), when the derivative of the reward of the first type of miners with respect to $L$ is different than the derivative of the reward of the second type of miners with respect to $K$.
\item
In the two dimensional case, if there are two different discount rates $r_1$ and $r_2$.
\end{itemize}

Let us finally comment on the fact that in a one-dimensional case, because it is quite easy to integrate an equation in only one variable, we are in some sense more frequently in the potential case.

\section{Conclusion}
\label{conclusion}

In this paper, we have built a flexible framework for thinking about the proof-of-work distributed consensus algorithm. In particular, we have analyzed the dynamics of the total hashrate which represents the total computational power devoted to mining for the blockchain.

For the sake of clarity, we have presented our model in increasing complexity. In the base case setting, there is a single infinite population of miners, the price of the cryptocurrency is constant, and miners cannot turn off the computing chips. We show existence and uniqueness of equilibrium. In equilibrium, the hashrate follows the rate of technological progress. 

We then provide a set of extensions that allow us to extend the result of the baseline model. We introduce a random exchange rate, several populations of miners facing different mining cost, as well as different kinds of interaction in the market for mining hardware. In all of those extensions, the essential structure of the problem is preserved and we prove existence and uniqueness of equilibrium.

There is a direct connection between the hashrate, the security, and the energy consumption on a PoW chain. Indeed, the most common attack is the 51\% attack, which describes an attacker controlling the majority of the hashrate that would allow him to make a double spend, or even arbitrarily rewrite part of the blockchain. The incentive scheme has been designed to minimize the likelihood of such an attack by maximizing the cost of a 51\% attack. Every thing else being equal, the more the hashrate, the more it is expensive to control half of it. The security of the blockchain is directly linked to the cost of computing the total hashrate. Therefore the fact that in equilibrium the hashrate follows the rate of technological progress imply that the security of the blockchain is constant. An other way to put it is to say that the energy consumption is constant in equilibrium. Indeed, the quantity of energy needed to compute the hashrate (the cost of the hashrate) is constant and the only increase in hashrate comes from the technological progress.

\section*{Acknowledgments}
The first, third and fourth authors have been partially supported by the Chair FDD (Institut Louis Bachelier). The fourth author has been partially supported by the Air Force Office for Scientific Research grant FA9550-18-1-0494 and the Office for Naval Research grant N000141712095.

\bibliographystyle{jf}
\bibliography{bib}

\end{document}